\newtheorem{lemma}{\textbf{Lemma}}
\newtheorem{thm}{\textbf{Theorem}}
\newtheorem{mydef}{\textbf{Definition}}[section]
\newtheorem{proposition}{\textbf{Proposition}}
\begin{document}

\title{How Should I Orchestrate Resources of My Slices for Bursty URLLC Service Provision?}
\author{Peng Yang, Xing Xi, Tony Q. S. Quek,~\IEEEmembership{Fellow,~IEEE}, Jingxuan Chen, Xianbin Cao,~\IEEEmembership{Senior Member,~IEEE},
        Dapeng Wu,~\IEEEmembership{Fellow,~IEEE}
\thanks{
P. Yang and T. Q. S. Quek are with the Information Systems Technology and Design, Singapore University of Technology and Design, 487372 Singapore (e-mail: \text{\{peng\_yang, tonyquek\}}@sutd.edu.sg).

X. Xi, J. Chen, and X. Cao are with the School of Electronic and Information Engineering, Beihang University, Beijing 100083, China, and also with the Key Laboratory of Advanced Technology, Near Space Information System (Beihang University), Ministry of Industry and Information Technology of China, Beijing 100083, China (e-mail: \text{\{xixing, chenjingxuan, xbcao\}}@buaa.edu.cn).

D. Wu is with the Department of Electrical and Computer Engineering, University of Florida, Gainesville FL 32611 USA (e-mail: dpwu@ieee.org).
}
}





\maketitle

\begin{abstract}
  Future wireless networks are convinced to provide flexible and cost-efficient services via exploiting network slicing techniques. However, it is challenging to configure slicing systems for bursty ultra-reliable and low latency communications (URLLC) service provision due to its stringent requirements on low packet blocking probability and low codeword decoding error probability. In this paper, we propose to orchestrate network resources for a slicing system to guarantee more reliable bursty URLLC transmission. We re-cut physical resource blocks and derive the minimum upper bound of bandwidth for URLLC transmission with a low packet blocking probability. We correlate coordinated multipoint beamforming with channel uses and derive the minimum upper bound of channel uses for URLLC transmission with a low codeword decoding error probability. Considering the agreement on converging diverse services onto shared infrastructures, we further investigate the network slicing for URLLC and enhanced mobile broadband (eMBB) service multiplexing. Particularly, we formulate the service multiplexing as an optimization problem, which is challenging to be mitigated due to requirements of future channel information and of tackling a two timescale issue. To address the challenges, we develop a resource optimization algorithm based on a sample average approximate technique and a distributed optimization method with provable performance guarantees.
\end{abstract}



\begin{IEEEkeywords}
Network slicing, bursty URLLC, coordinated multipoint, eMBB, distributed optimization
\end{IEEEkeywords}

\maketitle

\section{Introduction}
Future wireless networks are desired to provide diverse service requirements concerning throughput, latency, reliability, availability as well as operational requirements, e.g., energy efficiency and cost efficiency \cite{Matti2019Key,rost2017network}.
These service requirements are made by mobile networks and some novel application areas such as Industry 4.0, airborne communication, vehicular communication, and smart grid.

The International Telecommunication Union (ITU) has categorized
these services into three primary use cases: enhanced mobile broadband (eMBB), massive machine-type communications (mMTC), and ultra-reliable and low latency communications (URLLC) \cite{series2015imt}.
In order to provide cost-efficient solutions, it is agreed by some telecommunication organizations including the Third Generation
Partnership Project (3GPP) and the Next Generation Mobile
Network Alliance (NGMA), on the convergence of each use case onto a shared physical infrastructure instead of deploying individual network solution for each use case \cite{alliance2016description}.

To satisfy the requirement of reducing cost efficiency, the concept of network slicing has been proposed. The fundamental idea of network slicing is to logically isolate network resources and functions customized for specific requirements on a common physical infrastructure \cite{rost2017network}. A network slice as a virtual end-to-end (E2E) network for efficiently implementing resource isolation and increasing statistical multiplexing is self-contained with its virtual network resources, topology, traffic flow, and provisioning rules \cite{rost2017network, foukas2017network}.
Due to the significant role in constructing flexible and scalable future wireless networks, network slicing for mMTC, eMBB, and URLLC service (multiplexing) has received much attention from the academia \cite{albonda2019efficient,alsenwi2019embb,matera2018non}.

However, most of the current works do not study the impact of a time-varying channel on slice creation and benefits of exploiting advanced radio access techniques (RATs) in network slicing systems. For example, the actual channel may vary in short timescales (e.g., milliseconds) while slice creation may be conducted in relatively long timescales (e.g., minutes or hours). Therefore, network slicing needs to mitigate a multi-timescale issue. Additionally, the utilization of advance RATs (e.g., coordinated multipoint, CoMP) has been considered as a promising way of satisfying spectrum challenges and improving system throughput \cite{georgakopoulos2019coordination,maccartney2019millimeter}.

{
To tackle the multi-timescale issue of RAN slicing, the recent works in \cite{li2020hierarchical,ye2018dynamic} developed a software-defined networking (SDN)-based radio resource allocation framework. This framework could facilitate spectrum exploitation among different network slices in different time scales. Besides,} the work in \cite{tang2019service} developed a CoMP-based radio access network (RAN) slicing framework for eMBB and URLLC service multiplexing and proposed to tackle the multi-timescale issue of RAN slicing via an alternating direction method of multipliers (ADMM).
This work, however, assumed that URLLC traffic was uninterruptedly generated and ignored the significant bursty characteristic of URLLC traffic \cite{azari2019risk}. The bursty URLLC traffic (e.g., remote surgery
and remote robot control traffic) will further exacerbate the difficulty of slicing the RAN for URLLC involved service multiplexing from the following two aspects:
\begin{itemize}
    \item {\verb|Resource efficiency|}: one of the efficient proposals in future wireless communication networks to handle the uncertainty (including bursty) is to reserve network resources, which may waste a large number of valuable network resources. Therefore, it is important to develop resource orchestration schemes with high utilization for future networks, especially for some resource-constrained networks.
    \item {\verb|Immediate resource orchestration|}: bursty URLLC packets need to be immediately scheduled if there are available resources and the system utility can be maximized. Therefore, under the premise of improving resource efficiency, immediate resource orchestration schemes related to the number of flashing URLLC packets should be developed.
\end{itemize}

The difficulty motivates us to investigate the CoMP-enabled RAN slicing for bursty URLLC and eMBB service provision, and the primary contributions of this paper can be summarized as follows:
\begin{itemize}
  \item We re-cut physical resource blocks (PRBs) and derive the minimum upper bound of network bandwidth orchestrated for bursty URLLC traffic transmission to guarantee that the bursty URLLC packet blocking probability is of the order of a low value.
  \item After correlating CoMP beamforming with channel uses according to the network capacity result for the finite blocklength regime, we derive the minimum upper bound of channel uses required for transmitting a URLLC packet with a low codeword decoding error probability.
  \item We define eMBB and URLLC long-term slice utility functions and formulate the CoMP-enabled RAN slicing for bursty URLLC and eMBB service multiplexing as a resource optimization problem. The objective of the problem is to maximize the long-term total slice utility under constraints of total transmit power and network bandwidth. It is highly challenging to mitigate this problem due to the requirements of future channel information and of tackling a two timescale issue.
  \item To address the challenges, we propose a bandwidth and beamforming optimization algorithm. In this algorithm, we approximately transform the service multiplexing problem into a non-convex single timescale problem via a sample average approximate (SAA) technique. We exploit a distributed optimization method to mitigate the single timescale problem. Meanwhile, a semidefinite relaxation (SDR) scheme joint with a variable slack scheme is applied to transform the non-convex problem into a semidefinite programming (SDP) problem. We also perform theoretical analysis on the tightness and convergence of the proposed algorithm.
  \item At last, the performance of the proposed algorithm is validated through the comparison with the state-of-the-art algorithm.
\end{itemize}

The rest of the paper is organized as the following. In Section II, we review the related work. In Section III, we describe our system model and formulate the studied problem in Section IV. In Sections V and VI, we discuss the problem-solving method. Simulation results are given in Section VII, and this paper is concluded in Section VIII.

\section{Related work}
Enabling network slicing in 5G and beyond networks faces many challenges, in part owing to challenges in virtualizing and apportioning the RAN into several slices. To tackle these challenges, a rich body of previous works has been developed. In the following, we introduce some of the representatives on slice virtualization and resource apportionment.

In the research domain of slice virtualization, for example, a RAN slicing system for single RAT setting was developed to enable the dynamic virtualization of base stations (BSs) in \cite{foukas2017orion}. A control framework focusing on the balance of realistic traffic load and the deployment of virtual network functions was designed in \cite{ni2019end}. Based on {network functions virtualization (NFV)} technology, the works in \cite{landi2019provisioning, buyakar2018poster} proposed to automatically scale virtual network slices for content delivery (e.g., eMBB and mMTC traffic). Based on {SDN} and {NFV} technologies, slow startup and virtual Internet of things (IoT) network slices were created in \cite{wang2018poster} to meet different quality of service (QoS) requirements in IoT systems. To tackle the low-speed issue of constructing virtual network slices, a lightweight network slicing orchestration architecture was developed in \cite{li2019lightweight}.

In the research domain of resource apportionment, most of the literature focused on the resource abstraction and sharing. For instance, many recent works mapped resource sharing problems as the interaction between network resource providers and network slice brokers (or tenants). Scheduling mechanisms \cite{mandelli2019satisfying,marquez2018should}, game frameworks \cite{caballero2017network,caballero2018network,zheng2018statistical}, optimization frameworks \cite{alsenwi2019embb,matera2018non,leconte2018resource,d2018low,sciancalepore2017mobile,jiang2016network,bega2017optimising,sciancalepore2019rl,liu2019direct}, and artificial intelligence {(AI)}-based methods \cite{albonda2019efficient,gutterman2019ran} were then developed to help infrastructure providers improve profits (or utilities) and help tenants reap the benefits of resource sharing while guaranteeing their subscribers' service requirements.
Looking to resource abstraction, the work in \cite{ksentini2017toward} proposed a network slicing architecture featuring RAN resource abstraction, where a scheduling mechanism was crucial for abstracting network resources among slices. However, scheduling processes were not explored in more detail in this work. Using diverse resource abstraction types, an approach of virtualizing radio resources for multiple services was developed in \cite{chang2018radio} with the assumption that the traffic arrival rate of each slice equalled the number of requested radio resources.
Yet, few of the above works studied the benefit of slicing RAN equipped with advance RATs, e.g., CoMP.

Recently, there are some papers separately studying the CoMP without exploiting the network slicing \cite{ali2018coordinated,michaloliakos2017asynchronously, michaloliakos2016joint,cha2017coordinated,wu2019latency,navarro2018stochastic}. The fundamental principle of CoMP is similar to that of a distributed multiple-input multiple-output (MIMO) system, where CoMP cells act as a distributed antenna array under a virtual BS in the MIMO system \cite{georgakopoulos2019coordination,ali2018coordinated}.
In \cite{michaloliakos2017asynchronously,michaloliakos2016joint}, a CoMP architecture coupled with a user-beam selection scheme aiming at achieving high-performance gains without generating high overhead were developed, where all beams were assumed to transmit at the same power level. The work in \cite{cha2017coordinated} discussed the frequent inter-beam handover issue, which was caused by covering high-speed moving devices, in a CoMP-enabled mobile communication system with a single BS.
To improve ground users' QoS, fronthual bandwidth allocation and CoMP were jointly optimized in \cite{wu2019latency} without considering the impact of a time-varying channel on the scheme of bandwidth allocation. Some measurement-based studies on CoMP to mitigate user outage and improve network reliability were conducted in \cite{maccartney2019millimeter,navarro2018stochastic}, respectively. Some papers investigated the problem of CoMP-enabled network slicing \cite{tang2019service,yang2020multicast}; for instance, the work in \cite{yang2020multicast} designed a prototype for the CoMP-enabled RAN slicing system incorporating with multicast eMBB and bursty URLLC traffic.


\section{System model}
We consider a CoMP-enabled RAN slicing system for URLLC and eMBB multiplexing service provision. {The CoMP technique is explored because it can significantly improve the service quality of URLLC and eMBB via spatial diversity \cite{khoshnevisan20195g} and is easy and cost-efficient for the RAN slicing implementation \cite{yang2020multicast}}. In this system, the time is discretized and partitioned into time slots and minislots, and a time slot includes $T$ minislots. The URLLC deadlines are within a single minislot.
There are $N^e$ ground eMBB user equipments (UEs), $N^u$ ground URLLC UEs and $J$ {remote radio heads (RRHs), which connect to a baseband unit (BBU)}. The eMBB UE set and the URLLC UE set are denoted as ${\mathcal I}^e=\{1, \ldots, N^e\}$, ${\mathcal I}^u = \{1,\ldots,N^u\}$, respectively. We assume that eMBB and URLLC UEs are randomly distributed in a considered communication area, and RRHs are regularly deployed. Besides, each RRH is assumed to be equipped with $K$ antennas, and each UE is equipped with an antenna. All RRHs cooperate to transmit signals to a UE such that the signal-to-noise ratio (SNR) of it can be significantly enhanced\footnote{This paper exploits the optimization of transmit beamformers, and the issues of beam alignment and beam selection are out of the scope of this paper.}. Meanwhile, a flexible frequency division multiple access (FDMA) technique is exploited to achieve the inter-slice and intra-slice interference isolation.


\subsection{RAN slicing system}
Fig. \ref{fig:fig_eMBB_single_slice} shows an architecture of a RAN slicing system adopted in this paper, which consists of four parts: end UEs, RAN coordinator (RAN-C), network slice management, and network providers. At the beginning of each time slot, the RAN-C will decide whether to accept or reject the received slice requests for serving end eMBB and URLLC UEs after checking the available resource information (e.g., PRBs and transmit power) and computing. If a slice request can be accepted, network slice management will be responsible for creating or activating corresponding types of virtual slices, the process of which is time-costly and usually in a timescale of minutes to hours.
Next, if a slice request admission arrives, network providers will find the optimal servers and paths to place virtual network functions to satisfy the required E2E service of the slice.
\begin{figure}[!t]
\centering
\includegraphics[width=2.9in]{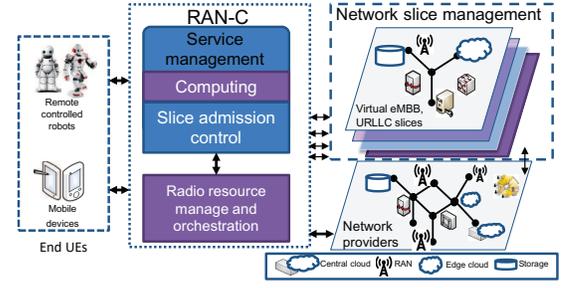}
\caption{The architecture of a RAN slicing system.}
\label{fig:fig_eMBB_single_slice}
\end{figure}

On the other hand, at the beginning of each minislot, coordinated RRHs will generate beamformers matching time-varying channels for each accepted slice.
In this RAN slicing system, we consider two types of slices, i.e., multicast eMBB slices and unicast URLLC slices\footnote{{We consider the multicast transmission for eMBB slices since the multicast transmission is envisioned to be a popular transmission scheme in the eMBB use case \cite{tang2019service}. As different URLLC UEs may require diverse data content (e.g., control signals in autonomous driving), unicast URLLC slices are considered.}}. The set of eMBB slices is denoted by {${\mathcal S}^e=\{1,2,\ldots,|{\mathcal S}^e|\}$ with $|{\mathcal S}^e|$ being the number of eMBB slices},
and the URLLC slice set is denoted by ${\mathcal S}^u =$ $\{1,2,\ldots,|{\mathcal S}^u|\}$, where $|{\mathcal S}^u|$ is the number of URLLC slices.

\subsection{eMBB slice model}
According to the above mentioned concept of a network slice (especially from the perspective of the QoS requirement of a slice), we can define an eMBB network slice request as the following.
\begin{mydef}[Multicast eMBB slice request]\label{embb_slice_res_def}
A multicast eMBB slice request can be characterized as a tuple $\{I_s^e, C_s^{th}\}$ for any multicast slice $s \in {\mathcal S}^e$, where $I_s^e$ is the number of multicast eMBB UEs in $s$ and $C_s^{th}$ is the data rate requirement of each UE in $s$.
\end{mydef}

In this definition, eMBB UEs are partitioned into $|{\mathcal S}^e|$ groups according to the data rate requirement of a UE. UEs in the same slice have the same data rate requirement. The slice request of each group of eMBB UEs will always be admitted by the RAN-C in this paper, and coordinated beamformers and PRBs will be effectively configured to accommodate data rate requirements of all eMBB UEs by way of the multicast transmission.

\subsection{URLLC slice model}
Similar to the definition of an eMBB slice request, we define the unicast URLLC slice request as follows.
\begin{mydef}[Unicast URLLC slice request]\label{urllc_slice_definition}
A unicast URLLC slice request can be characterized as four tuples $\{I_s^u, D_s, \alpha, \beta\}$ for any unicast slice $s \in {\mathcal S}^u$, where $I_s^u$ denotes the number of unicast URLLC UEs of $s$, $D_s$ represents the latency requirement of each UE in $s$, $\alpha$ and $\beta$ are denoted as the data packet blocking probability and the packet decoding error probability of each URLLC UE, respectively.
\end{mydef}

In this definition, URLLC UEs are classified into $|{\mathcal S}^u|$ clusters according to the latency requirement of each UE. Owing to the ultra-low latency requirement URLLC traffic should be immediately scheduled upon arrival; thus, URLLC slice requests will always be accepted by the RAN-C in this paper\footnote{We consider that the RAN-C can always accept eMBB and URLLC service requests here. However, whether these service requests can always be accommodated may be determined by both the QoS requirements of eMBB and URLLC UEs and the system service capability in practice.}. Then, coordinated beamformers will be correspondingly generated to cover UEs by way of the unicast transmission at the beginning of each minislot.

\section{Problem formulation}
On the basis of the above system model, this section aims to formulate the problem of RAN slicing for URLLC and eMBB multiplexing service provision.

\subsection{QoS requirements}
\subsubsection{Data rate requirements of eMBB UEs}
The generated transmit beamformers for UEs of slice $s$ ($s \in {\mathcal S}^e$) on RRH $j$ at minislot $t$ is denoted by ${\bm v}_{j,s}(t) \in {\mathbb C}^K$. The channel coefficient between RRH $j$ and eMBB UE $i$ of $s$ at minislot $t$ is denoted by ${\bm h}_{ij,s}(t) \in {\mathbb C}^K$, which does not greatly change in each minislot. Suppose that the instantaneous channel coefficient ${\bm h}_{ij,s}(t)$ can be effectively estimated by exploiting some machine learning methods \cite{yuan2020machine} at the beginning of minislot $t$ and the channel fading process is ergodic over a time slot for each $(i,j)$ pair. The SNR received at UE $i$ of slice $s$ at $t$ can then be written as
\begin{equation}\label{eq:eMBB_snr}
    SNR_{i,s}^e(t) = \frac{{|\sum\nolimits_{j \in {\mathcal J}} {{{\bm h}_{ij,s}^{\rm H}}{{(t)}}{{\bm v}_{j,s}}(t)} |^2}}{{\sigma _{i,s}^2}}, {\rm for} \text{ } {\rm all} \text{ } {\rm s}\in {\mathcal S}^e,i\in {\mathcal I}_s^e,
\end{equation}
where $\sigma_{i,s}^2$ denotes the noise power, ${\mathcal I}_s^e = \{1,\ldots,I_s^e\}$ is the set of eMBB UEs of $s$. Since the multicast transmission and flexible FDMA mechanism are exploited the interference is not involved.

According to Shannon formula, the achievable data rate $\gamma _{i,s}^e(t)$ of UE $i$ of slice $s$ at $t$ can be expressed as
\begin{equation}\label{eq:gamma_ise}
\gamma _{i,s}^e(t) = \omega_s^e(t){\log _2}(1 + SNR_{i,s}^e(t)),{\rm for} \text{ } {\rm all} \text{ } {\rm s}\in {\mathcal S}^e,i\in {\mathcal I}_s^e,
\end{equation}
where $\omega_{s}^e(t)$ denotes the bandwidth allocated to $s$ at $t$.

The following rate-related condition should be satisfied to admit the eMBB slice request
\begin{equation}\label{eq:eMBB_QoS}
    \gamma _{i,s}^e(t) \ge C_s^{th},{\rm for} \text{ } {\rm all} \text{ } {\rm s}\in {\mathcal S}^e,i\in {\mathcal I}_s^e.
\end{equation}

\subsubsection{QoS requirements of URLLC UEs}
As is known, it is challenging to design a RAN slicing system to support the transmission of URLLC traffic owing to URLLC UEs' stringent QoS requirements. What makes the issue more difficult is that URLLC traffic may be bursty. Bursty URLLC traffic, which may cause severe packet blocking, may significantly degrade the system performance of RAN slicing when URLLC slices are not well configured. To understand the characteristic of bursty URLLC traffic and mitigate the effect of bursty URLLC traffic on RAN slicing, we will address the following two questions.
\begin{itemize}
    \item \emph{How to model bursty URLLC traffic?}
    \item \emph{What schemes can be developed for the RAN slicing system such that the URLLC packet blocking probability can be significantly reduced?} 
\end{itemize}

During a time slot, bursty URLLC data packets destined to UEs of each URLLC slice and aggregated at the RAN-C are modelled as a compound Poisson process \cite{becchi2008poisson}, where arrivals happen in bursts (or batches, i.e., several arrivals can happen at the same instant) and the inter-batch duration is independent and exponentially distributed.
The vector of URLLC packet arrival rates is denoted by ${\bm \lambda} = \{\lambda_1,\ldots,\lambda_s, \ldots,\lambda_{|{\mathcal S}^u|}\}$, where $\lambda_s = \lambda_{b,s}/\lambda_{a,s} $ is a constant and represents the average arrival rate of packets destined to UEs of slice $s$ during a unit of time, $\lambda_{a,s}$ is the average inter-batch time interval, and $\lambda_{b,s}$ is the average number of arrivals in a batch.

On the basis of the URLLC traffic model, we next discuss how to reduce the URLLC packet blocking probability via re-cutting PRBs.
To satisfy QoS requirements of URLLC UEs, a portion of PRBs should be allocated to them. In the RAN slicing system, a URLLC UE $i$ of $s$ will be allocated a block of network bandwidth of size $\omega_{i,s}^{u}(t)$ for a period of time $d_{s}$ at minislot $t$. Since URLLC packets in $s$ have the deadline of $D_s$ seconds for E2E transmission latency, we shall always choose $d_{s} \le D_s$. Besides, a packet destined to the UE $i$ will be coded before sending out to improve the reliability\footnote{\emph{Packet} and \emph{codeword} are different terminologies. \emph{Packet} is a terminology used in the network layer (measuring in bits), and \emph{codeword} is the one adopted in the physical layer (measuring in symbols).}; and the transmission of a codeword needs $r_{i,s}^u(t)$ channel uses that measure the speed and capacity of a specific information channel.
The channel use, bandwidth and transmission latency are related by $r_{i,s}^{u}(t) = \kappa \omega_{i,s}^{u}(t) d_{s}$ \cite{anand2018resource}, where $\kappa $ is a constant representing the number of channel uses per unit time per unit bandwidth of the FDMA frame structure and numerology. We denote the channel use set of URLLC UEs as ${\bm r(t)} = \{{\bm r}_1(t), \ldots, {\bm r}_s(t), \ldots, {\bm r}_{|{\mathcal S}^u|}(t)\}$ with ${\bm r}_s(t) = \{r_{1,s}^u(t),\ldots,r_{I_s^u,s}^u(t)\}$.

Let us model the aggregation and departure of URLLC packets in the RAN-C as an $M/M/W^u$ queueing system, {which has higher key performance indicators than some other queueing systems like $M/M/1$}, with finite bandwidth $W^u$ and arrival data rate $\bm \lambda$.
However, the $M/M/W^u$ queueing system itself may not capture deadline-constrained jobs. To tackle this issue, we introduce a queueing probability concept into the queueing system.
Once entering into a queue, URLLC packets should be immediately served. Nevertheless, due to stochastic variations in the packet arrival process, there may not be enough spare bandwidth to serve new arrivals occasionally, and new arrivals should wait in the queue and may be blocked.
Then, we should take actions to reduce the blocking probability of URLLC packets.
Effectively re-cutting the PRBs is a way of significantly reducing the URLLC packet blocking probability.
Denote $p_{b}({\bm \omega}^u(t), {\bm \lambda},{\bm d},W^u(t))$ as the blocking probability experienced by arrival URLLC packets at minislot $t$ where ${\bm \omega}^u(t) = \{\omega_{1,1}^u$ $(t), \ldots, \omega_{i,s}^u(t), \ldots, \omega_{I_{|{\mathcal S}^u|}^u,|{\mathcal S}^u|}^u(t)\}$ and ${\bm d} = \{d_1, \ldots, d_s \ldots, d_{|{\mathcal S}^u|}\}$. The following theorem provides us a clue of re-cutting PRBs for URLLC packet transmission.

\begin{thm}\label{thm:time_frequency_plane_design}
At any minislot $t$, for the given ${\bm \omega}^u(t)$, $\bm d$, and a positive integer $q$, define ${\tilde {\bm \omega}}^u(t) = \{\omega_{1,1}^u(t), \ldots, \omega_{i,s}^u(t)/q, \ldots, \omega_{I_{|{\mathcal S}^u|}^u,|{\mathcal S}^u|}^u$ $(t)\}$ and ${\tilde{\bm d}} = \{d_1, \ldots, q d_s, \ldots, d_{|{\mathcal S}^u|}\}$. If $\lambda_s d_s < 1$, then there exists a value $\tilde W^u(t)$ such that for $W^u(t) > \tilde{W}^u(t)$ we have $p_{b}({\bm \omega}^u(t), {\bm \lambda},{\bm d},W^u(t)) \ge p_{b}({\tilde {\bm \omega}}^u(t), {\bm \lambda},{\tilde{\bm d}},W^u(t))$ \cite{anand2018resource}.
\end{thm}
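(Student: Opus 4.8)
The plan is to read the statement as an \emph{economies-of-scale} (trunking-efficiency) property of the underlying loss system, made visible by the stated identity $r_{i,s}^u(t)=\kappa\,\omega_{i,s}^u(t)\,d_s$. First I would observe that the map sending $\omega_{i,s}^u(t)\mapsto\omega_{i,s}^u(t)/q$ and $d_s\mapsto q\,d_s$ leaves the channel-use count $r_{i,s}^u(t)$ invariant: it merely re-cuts the time--frequency tile of the chosen UE into one that is $q$ times narrower in bandwidth and $q$ times longer in time. Translating this into the $M/M/W^u$ loss model, the number of bandwidth ``servers'' available to that class scales as $W^u(t)/\omega_{i,s}^u(t)\mapsto q\,W^u(t)/\omega_{i,s}^u(t)$, while its offered load (arrival rate times mean holding time) scales as $\lambda_s d_s\mapsto q\,\lambda_s d_s$. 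Per-server utilization is therefore unchanged, so the net effect of the re-cut is to replace a system with $c$ servers and load $a$ by one with $qc$ servers and load $qa$, and the claim $p_b(\text{original})\ge p_b(\text{re-cut})$ becomes the assertion that the loss probability decreases under this joint scaling.

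For the core monotonicity I would work with the Erlang loss formula $B(c,a)=\dfrac{a^{c}/c!}{\sum_{k=0}^{c}a^{k}/k!}$ through its integral representation
\begin{equation*}
\frac{1}{B(c,a)}=\int_{0}^{\infty}e^{-x}\Bigl(1+\frac{x}{a}\Bigr)^{c}\,dx .
\end{equation*}
Comparing $B(c,a)$ with $B(qc,qa)$ then reduces to the pointwise integrand inequality $\bigl(1+x/(qa)\bigr)^{qc}\ge\bigl(1+x/a\bigr)^{c}$ for every $x\ge 0$. Taking $c$-th roots, this is $\bigl(1+u/q\bigr)^{q}\ge 1+u$ with $u=x/a\ge 0$, which I would establish from the concavity of the logarithm (equivalently from $\ln(1+v)\ge v/(1+v)$, so that $q\ln(1+u/q)$ is nondecreasing in $q$ and equals $\ln(1+u)$ at $q=1$). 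Hence the re-cut integrand dominates termwise, giving $1/B(qc,qa)\ge 1/B(c,a)$ and therefore $B(qc,qa)\le B(c,a)$, which is exactly the inequality direction claimed.

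I expect the main obstacle to lie not in this scalar monotonicity but in the fact that $p_b(\bm\omega^u(t),\bm\lambda,\bm d,W^u(t))$ is \emph{not} a single Erlang-B quantity: the total bandwidth $W^u(t)$ is shared by all URLLC classes and the arrivals are batched (compound Poisson), so the exact object is the blocking probability of a multi-class, batch-arrival loss network (a stochastic knapsack) in which only the chosen class is re-cut while the others are frozen, and such product-form blocking probabilities are not monotone term by term. To handle this I would pass to the large-bandwidth regime that the statement itself invokes: as $W^u(t)$ grows, both the per-class and aggregate blocking probabilities are governed by their dominant (large-deviations) decay exponents in $W^u(t)$, and the scaling above shows the re-cut configuration's exponent is no smaller than the original one. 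The hypothesis $\lambda_s d_s<1$ is precisely what keeps the re-cut class subcritical (less than one Erlang of load, hence a well-defined exponent), and this is what secures the existence of the threshold $\tilde W^u(t)$ beyond which the inequality holds for all $W^u(t)>\tilde W^u(t)$; the integer rounding of $W^u(t)/\omega_{i,s}^u(t)$ is harmless, since $\lfloor q x\rfloor\ge q\lfloor x\rfloor$ only adds servers and further lowers the re-cut blocking. The multi-class coupling, together with extracting and comparing these exponents, is where I anticipate the real work.
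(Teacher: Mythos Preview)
The paper does not prove this theorem at all: the statement is imported wholesale from the cited reference (the trailing citation is part of the theorem text), and no argument for it appears anywhere in the body or the appendices. So there is nothing in the paper for your proposal to be compared against; you are supplying a proof where the authors are content to quote one.

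On the substance of what you wrote: the single-class core is sound. The integral representation
\[
\frac{1}{B(c,a)}=\int_{0}^{\infty}e^{-x}\Bigl(1+\frac{x}{a}\Bigr)^{c}\,dx
\]
is the right tool, and the reduction to $(1+u/q)^{q}\ge 1+u$ via monotonicity of $q\mapsto q\ln(1+u/q)$ is clean and correct. You have also correctly located where the real content lies: the object $p_b(\cdot)$ in the statement is not a scalar Erlang-B probability but the blocking probability of a multi-class, batch-arrival loss system sharing total bandwidth $W^u(t)$, with only one slice being re-cut. Your plan to pass to large-$W^u(t)$ asymptotics and compare decay exponents is exactly the mechanism that produces the threshold $\tilde W^u(t)$ in the statement, and the hypothesis $\lambda_s d_s<1$ is indeed the subcriticality needed for that exponent to be positive. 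One small clarification you should make explicit: because $d_s$ is a \emph{slice}-level parameter, the re-cut $d_s\mapsto q d_s$ forces $\omega_{i,s}^u\mapsto\omega_{i,s}^u/q$ for \emph{every} $i$ in slice $s$ (not just a single UE) if the channel-use counts are to be preserved; the theorem's notation is ambiguous on this point, but the intended reading is the whole-slice one, and your server-scaling argument should be stated at that granularity.
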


This theorem tells us that if we shorten the packet latency, then fewer resource blocks will be available in the frequency plane, which will definitely cause more severe queueing effect and will significantly increase the packet blocking probability. If we narrow resource blocks in the frequency plane, then more concurrent transmissions are available, which is beneficial for decreasing the packet blocking probability.

Therefore, we should scale up $d_{s}$ and select $d_{s}$ and $\omega_{i,s}^u(t)$ for any URLLC slice $s$ at any minislot according to the following equation
\begin{equation}\label{eq:omega_s}
   d_{{s}}  = D_s \ {\rm and} \ \omega_{{i,s}}^{u}(t) = \frac{{r_{{i,s}}^{u}(t)}}{{\kappa D_s}}, {{\rm for} \text{ } {\rm all} \text{ } i \in {\mathcal I}_s^u}, s \in {\mathcal S}^u.
\end{equation}

With (\ref{eq:omega_s}), a square-root staffing rule \cite{harchol2013performance} can be exploited to derive the minimum upper bound of bandwidth allocated to URLLC slices such that the QoS requirements URLLC UEs can be satisfied. The following lemma presents the bound.
\begin{lemma}\label{lem:upper_bound_urllc_bandwidth}
    At any minislot $t$, for a given $M/M/W^u$ queueing system with packet arrival rates $\bm \lambda$ and packet transmit speeds $\{{\kappa/r_{i,s}^u(t)}\}$, let $W^u({\bm r}(t))$ denote the minimum upper bound of bandwidth allocated to all URLLC slices to ensure that $P_Q^{M/M/W^u}\le \varsigma$ and $p_b({\bm \omega}^u(t), {\bm \lambda}, {\bm D}, W^u({\bm r}(t)))$ is of the order of $\alpha$, where $P_Q^{M/M/W^u}$ represents the queueing probability. If $\varsigma > \alpha$, then
    \begin{equation}\label{eq:URLLC_bandwidth}
        W^u(\bm r(t)) \ge A({\bm r}(t)) + c(\varsigma,\alpha)\sqrt{B({\bm r}(t))},
    \end{equation}
    where $A({\bm r}(t)) = \sum\nolimits_{s \in {\mathcal S}^u } {\sum\nolimits_{i \in {\mathcal I}_s^u} {{\lambda _{s}}\frac{{{{r_{i,s}^u(t)}}}}{\kappa}} }$, $B({\bm r}(t)) = \sum\nolimits_{s \in {\mathcal S}^u} {\sum\nolimits_{i \in {\mathcal I}_s^u} {{\lambda _{s}}\frac{{r_{i,s}^u(t)}^2}{{{\kappa ^2}{D_s}}}} }$, ${\bm D} = \{D_1, \ldots, D_s\}$, and
    \begin{equation}\label{eq:calculate_c}
        c(\varsigma,\alpha) = \frac{{{\alpha} - \varsigma {\alpha}}}{{\varsigma  - {\alpha}}}\sqrt {\frac{{\sum\nolimits_{s \in {{\mathcal S}^u}} {I_s^u\lambda _s^2D_s^2} }}{{\mathop {\min }\nolimits_{s \in {{\mathcal S}^u}} \{ {\lambda _s}{D_s}\} }}}.
    \end{equation}
\end{lemma}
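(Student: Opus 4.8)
The plan is to recast the aggregation and departure of URLLC packets at the RAN-C as a pool of bandwidth ``servers'' of total size $W^u$, to compute the first two moments of the instantaneous bandwidth demand, to pass to a Gaussian surrogate for that demand, and then to read off the minimum bandwidth from the square-root staffing rule as ``mean demand plus a safety margin times the standard deviation,'' with the margin pinned down by the two QoS targets. The quantities $A(\bm r(t))$ and $B(\bm r(t))$ will turn out to be exactly this mean and variance.

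First I would fix the demand model. Under the choice $d_s=D_s$ and $\omega_{i,s}^u(t)=r_{i,s}^u(t)/(\kappa D_s)$ prescribed in \eqref{eq:omega_s}, a packet for URLLC UE $i$ of slice $s$ holds a band of width $\omega_{i,s}^u(t)$ for a time $D_s$, and such packets arrive at average rate $\lambda_s$. Viewing the offered load through the $M/G/\infty$ (infinite-server) surrogate, Little's law gives the mean number of simultaneously active type-$(i,s)$ packets equal to $\lambda_s D_s$; approximating this count as Poisson (i.e. replacing the compound-Poisson batches by their ordinary-Poisson surrogate) makes its variance $\lambda_s D_s$ as well. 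With independent components $N_{i,s}$, the aggregate demand $X(t)=\sum_{s\in\mathcal S^u}\sum_{i\in\mathcal I_s^u}\omega_{i,s}^u(t)\,N_{i,s}$ then satisfies
\[
\mathbb{E}[X(t)]=\sum_{s\in\mathcal S^u}\sum_{i\in\mathcal I_s^u}\lambda_s D_s\,\omega_{i,s}^u(t),\qquad \mathrm{Var}[X(t)]=\sum_{s\in\mathcal S^u}\sum_{i\in\mathcal I_s^u}\lambda_s D_s\,\bigl(\omega_{i,s}^u(t)\bigr)^2 .
\]
Substituting $\omega_{i,s}^u(t)=r_{i,s}^u(t)/(\kappa D_s)$ reproduces $\mathbb{E}[X(t)]=A(\bm r(t))$ and $\mathrm{Var}[X(t)]=B(\bm r(t))$ verbatim.

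Second, I would invoke the Gaussian approximation $X(t)\approx\mathcal N\!\bigl(A(\bm r(t)),B(\bm r(t))\bigr)$ and stage the pool at $W^u=A(\bm r(t))+\beta\sqrt{B(\bm r(t))}$, so that the probability that demand exceeds capacity --- the queueing probability $P_Q^{M/M/W^u}$ --- becomes a decreasing function of the dimensionless margin $\beta$; this is exactly the square-root staffing rule of \cite{harchol2013performance}. To fix $\beta$ I would use that a packet is blocked only after being forced to queue, so $p_b = P_Q^{M/M/W^u}\cdot P(\text{deadline miss}\mid\text{queued})$. Imposing $P_Q^{M/M/W^u}\le\varsigma$ together with $p_b$ of order $\alpha$ forces the conditional deadline-miss probability to be of order $\alpha/\varsigma$, which is feasible precisely when $\varsigma>\alpha$ (the hypothesis of the lemma). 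In the heavy-traffic/diffusion description the conditional delay given queueing is approximately exponential with a rate proportional to the spare capacity $W^u-A(\bm r(t))=\beta\sqrt{B(\bm r(t))}$; solving the two QoS equations for the smallest admissible margin and identifying it with $c(\varsigma,\alpha)$ is expected to yield the prefactor $\alpha(1-\varsigma)/(\varsigma-\alpha)$ from a linearization about the target operating point, and the factor $\sqrt{\sum_{s\in\mathcal S^u}I_s^u\lambda_s^2 D_s^2/\min_{s\in\mathcal S^u}\{\lambda_s D_s\}}$ from bounding the heterogeneous per-slice drain rates by their worst case so as to collapse the per-slice conditions into the single scalar margin of \eqref{eq:calculate_c}.

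The main obstacle is this last step. The identities for $A(\bm r(t))$ and $B(\bm r(t))$ drop out cleanly from the Poisson/infinite-server accounting, but the exact constant $c(\varsigma,\alpha)$ requires (i) a rigorous justification of the Gaussian surrogate for the aggregate demand --- via a central limit argument over the many independent per-UE Poisson components, or a Chernoff/heavy-traffic bound with explicit error control, and with the compound-Poisson batching absorbed into the approximation --- and (ii) reconciling the exponential conditional-delay tail with the rational form of $c(\varsigma,\alpha)$, which hinges on the correct linearization and on the worst-case bounding that produces the $\min_{s\in\mathcal S^u}\{\lambda_s D_s\}$ normalization. I would expect the inequality direction in \eqref{eq:URLLC_bandwidth} to emerge from precisely that bounding step, since replacing the heterogeneous per-slice rates by their least favourable value can only enlarge the margin that the QoS targets demand.
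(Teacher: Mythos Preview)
Your identification of $A(\bm r(t))$ and $B(\bm r(t))$ as the mean and variance of the instantaneous bandwidth demand via Little's law and the Poisson count surrogate is correct and in fact more illuminating than the paper, which simply takes the square-root staffing form $W^u\approx A+c\sqrt{B}$ as given from \cite{harchol2013performance}.

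The genuine gap is in your route to the constant $c(\varsigma,\alpha)$. The paper does \emph{not} use the decomposition $p_b=P_Q\cdot P(\text{miss}\mid\text{queued})$ or any heavy-traffic exponential-delay approximation. Instead it invokes a specific closed-form relation from \cite{harchol2013performance},
\[
P_Q^{M/M/W^u}=\frac{\bigl(A(\bm r(t))+c\sqrt{B(\bm r(t))}\bigr)\,p_b}{c\sqrt{B(\bm r(t))}+A(\bm r(t))\,p_b},
\]
sets $P_Q\le\varsigma$ and $p_b=\alpha$, and solves algebraically to obtain
\[
c\ \ge\ \frac{A(\bm r(t))}{\sqrt{B(\bm r(t))}}\cdot\frac{\alpha(1-\varsigma)}{\varsigma-\alpha}.
\]
This is where the rational prefactor comes from; your linearization heuristic does not produce it, and your conditional-probability decomposition would yield $\alpha/\varsigma$ rather than $\alpha(1-\varsigma)/(\varsigma-\alpha)$.

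The second step you are missing is how the $\bm r$-dependence of $A/\sqrt{B}$ is removed. The paper bounds $A(\bm r(t))$ above by Cauchy--Schwarz,
\[
A(\bm r(t))\le\sqrt{\textstyle\sum_{s,i}\lambda_s^2D_s^2}\,\sqrt{\textstyle\sum_{s,i}r_{i,s}^u(t)^2/(\kappa^2D_s^2)},
\]
and bounds $B(\bm r(t))$ below by factoring out $\min_s\{\lambda_sD_s\}$. Dividing these two bounds makes the $r_{i,s}^u$ terms cancel exactly, leaving the $\bm r$-free expression in \eqref{eq:calculate_c}. Your description of ``bounding the heterogeneous per-slice drain rates by their worst case'' gestures at the $\min$ step but misses the Cauchy--Schwarz step, without which the $\bm r$-dependence does not disappear.
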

\begin{proof}
Please refer to Appendix A.
\end{proof}

In (\ref{eq:URLLC_bandwidth}), the first summation item denotes the mean value of the bandwidth allocated to URLLC slices, and the second summation item can be regarded as the redundant bandwidth allocated to mitigate the impact of stochastic variations in the arrival process.

We next discuss the URLLC capacity and channel uses.
For URLLC slice $s \in {\mathcal S}^u$, let $\bm g_{ij,s}(t) \in {\mathbb C}^K$ be the transmit beamformer to UE $i$ from RRH $j$ at $t$, ${{\bm h}_{ij,s}}{{(t)}}$ is the corresponding channel coefficient,
the corresponding SNR received at UE $i$ can then be expressed as
\begin{equation}\label{eq:URLLC_snr}
    SNR_{i,s}^u(t) = \frac{{|\sum\nolimits_{j \in {\mathcal J}} {{{\bm h}_{ij,s}^{\rm H}}{{(t)}}{{\bm g}_{ij,s}}(t)} {|^2}}}{{\phi \sigma _{i,s}^2}}, {{\rm for} \text{ } {\rm all} \text{ } i \in {\mathcal I}_s^u}, s \in {\mathcal S}^u.
\end{equation}

The perception of channel state information (CSI) or channel fading distribution may require the signal exchange before transmission, which entails extra transmit latency and potential reliability loss as well. Therefore it may be impossible to obtain perfect CSI for URLLC service provision, and a constant $\phi > 1$ is involved in (\ref{eq:URLLC_snr}) to model the SNR loss for URLLC traffic transmission \cite{liu2014energy}. Meanwhile, interference signals are not included in (\ref{eq:URLLC_snr}) as a flexible FDMA mechanism is exploited.

On the other hand, owing to the stringent low latency requirements, URLLC packets typically have very short blocklength. We therefore utilize the capacity result for the finite blocklength regime in \cite{polyanskiy2010channel,schiessl2015delay} to calculate the URLLC capacity rather than the Shannon formula that cannot effectively capture the reliability of packet transmission. Particularly, for each UE $i$ in $s \in {\mathcal S}^u$, the number of information bits $L_{i,s}^u(t)$ of a URLLC packet that is transmitted at $t$ with a codeword decoding error probability of the order of $\beta$ in $r_{i,s}^u(t)$ channel uses can be calculated by {\cite{polyanskiy2010channel,schiessl2015delay}}
\begin{equation}\label{eq:URLLC_bit_length}
L_{i,s}^u(t) \approx r_{i,s}^u(t)C(SNR_{i,s}^u(t)) -
{Q^{ - 1}}(\beta )\sqrt {r_{i,s}^u(t)V(SNR_{i,s}^u(t))},
\end{equation}
where $C(SNR_{i,s}^u(t)) = \log_2(1 + SNR_{i,s}^u(t) )$ is the AWGN channel capacity per Hz, $V(SNR_{i,s}^u(t)) = \ln^2 2\left( {1 - \frac{1}{{{{(1 + SNR_{i,s}^u(t))}^2}}}} \right)$ is the channel dispersion.

The expression of (\ref{eq:URLLC_bit_length}) is complicated; yet, the following lemma gives the approximate expression of channel uses in terms of codeword decoding error probability $\beta$ and SNR.
\begin{lemma}\label{lem:lemma_channel_use}
    For any UE $i$ in $s \in {\mathcal S}^u$, the required channel uses $r_{i,s}^u(t)$ of transmitting a URLLC packet of size of $L_{i,s}^u(t)$ to UE $i$ can be approximated as
    \begin{equation}\label{eq:URLLC_channel_use}
        \begin{array}{l}
        r_{i,s}^u(t) \le \frac{{L_{i,s}^u(t)}}{{C(SNR_{i,s}^u(t))}} + \frac{{{{({Q^{ - 1}}(\beta))}^2}}}{{2{{(C(SNR_{i,s}^u(t)))}^2}}}\\
         \qquad \quad + \frac{{{{({Q^{ - 1}}(\beta))}^2}}}{{2{{(C(SNR_{i,s}^u(t)))}^2}}}\sqrt {1 + \frac{{4L_{i,s}^u(t)C(SNR_{i,s}^u(t))}}{{{{({Q^{ - 1}}(\beta))}^2}}}}.
        \end{array}
    \end{equation}
\end{lemma}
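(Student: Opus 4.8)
The plan is to treat the finite-blocklength relation \eqref{eq:URLLC_bit_length} as an equation that implicitly fixes the required channel uses, and to invert it in closed form. To lighten notation, write $r \equiv r_{i,s}^u(t)$, $L \equiv L_{i,s}^u(t)$, $C \equiv C(SNR_{i,s}^u(t))$, and $V \equiv V(SNR_{i,s}^u(t))$, so that \eqref{eq:URLLC_bit_length} reads $L \approx rC - Q^{-1}(\beta)\sqrt{rV}$. The key observation is that $r$ enters only through $r$ and $\sqrt{r}$; hence the substitution $x = \sqrt{r}$ converts the relation into a quadratic in the single nonnegative variable $x$, namely $Cx^2 - Q^{-1}(\beta)\sqrt{V}\,x - L \approx 0$.

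First I would solve this quadratic. Since $C > 0$ and, for the small target error $\beta$ of a URLLC codeword, $Q^{-1}(\beta) > 0$, the discriminant $(Q^{-1}(\beta))^2 V + 4CL$ is positive and the only admissible (nonnegative) root is
\[
x = \frac{Q^{-1}(\beta)\sqrt{V} + \sqrt{(Q^{-1}(\beta))^2 V + 4CL}}{2C}.
\]
Squaring and rewriting the cross term via $\sqrt{(Q^{-1}(\beta))^2 V + 4CL} = Q^{-1}(\beta)\sqrt{V}\,\sqrt{1 + 4CL/((Q^{-1}(\beta))^2 V)}$ gives the exact inverse
\[
r \approx \frac{L}{C} + \frac{(Q^{-1}(\beta))^2 V}{2C^2} + \frac{(Q^{-1}(\beta))^2 V}{2C^2}\sqrt{1 + \frac{4CL}{(Q^{-1}(\beta))^2 V}}.
\]

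The final and main obstacle is to pass from this $V$-dependent identity to the $V$-free bound in \eqref{eq:URLLC_channel_use}. A naive substitution $V \to 1$ is not obviously safe, because $V$ sits both in the multiplicative prefactor and inside the radical through the ratio $4CL/((Q^{-1}(\beta))^2 V)$, which grows as $V$ shrinks. The clean resolution is to absorb the prefactor into the radical, rewriting the last term as $\frac{1}{2C^2}\sqrt{(Q^{-1}(\beta))^4 V^2 + 4CL\,(Q^{-1}(\beta))^2 V}$, after which every summand under the square root is manifestly increasing in $V$. I would then use that the paper's channel dispersion obeys $V = \ln^2 2\,(1 - (1 + SNR_{i,s}^u(t))^{-2}) < 1$, so $V \le 1$ and $V^2 \le 1$; bounding the prefactor by $(Q^{-1}(\beta))^2/(2C^2)$ and each radical summand by $(Q^{-1}(\beta))^4$ and $4CL\,(Q^{-1}(\beta))^2$ respectively yields exactly \eqref{eq:URLLC_channel_use}. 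The one caveat I would flag is that \eqref{eq:URLLC_bit_length} is itself an approximation, so the result holds as an upper bound up to that same order; stating the monotonicity and the bound $V < 1$ explicitly keeps the direction of every inequality unambiguous.
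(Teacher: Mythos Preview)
Your proposal is correct and follows essentially the same route as the paper: substitute $x=\sqrt{r}$ to reduce \eqref{eq:URLLC_bit_length} to a quadratic, take the admissible root, and then exploit monotonicity of the resulting expression in $V$ together with an upper bound on $V$ to obtain \eqref{eq:URLLC_channel_use}. The only cosmetic difference is the order of operations (the paper bounds $V$ first and then solves, whereas you solve first and then bound), and you use $V<1$ while the paper's proof text invokes the sharper $V\le\ln^2 2$; in fact your choice $V\le 1$ is the one that reproduces the constants in \eqref{eq:URLLC_channel_use} exactly.
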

\begin{proof}
Please refer to Appendix B.
\end{proof}

\subsection{Physical resource constraints}
As each RRH has a limitation on the maximum transmit power $E_j$ $(j \in {\mathcal J})$, we can obtain the following power constraint
\begin{equation}\label{eq:RRH_energy}
   \sum\limits_{s \in {{\mathcal S}^e}} {{{\bm v}_{j,s}^{\rm H}}{{(t)}}{{\bm v}_{j,s}}(t)}  + \sum\limits_{s \in {\mathcal S}^u} {\sum\limits_{i \in {\mathcal I}_s^u} {{{\bm g}_{ij,s}^{\rm H}}{{(t)}}{{\bm g}_{ij,s}}(t)} }  \le {E_j},
\end{equation}
{where the first item on the left-hand-side (LHS) of (\ref{eq:RRH_energy}) is the total transmit power consumption of RRHs for serving eMBB UEs and the second item for serving URLLC UEs.}

Besides, since the multicast eMBB and the unicast URLLC service provisions are considered, and network bandwidth resources allocated to eMBB and URLLC slices are separated in the frequency plane, the network bandwidth constraint can be written as
\begin{equation}\label{eq:total_bandwidth}
   \sum\limits_{s \in {{\mathcal S}^e}} {\omega_s^e(\bar t)}  + W^u(\bm r(t))  \le W,
\end{equation}
where ${\omega_s^e(\bar t)}$ represents the bandwidth allocated to eMBB slice $s \in {\mathcal S}^e$ over a time slot, $W$ denotes the maximum network bandwidth.

\subsection{Long-term utility function design}
We next discuss the design of the objective function of service multiplexing. To achieve the maximum utility of service multiplexing, utilities of eMBB and URLLC service provisions should be maximized simultaneously. In this paper, we leverage a key performance indicator, i.e., energy efficiency, which is popularly exploited in resource allocation problems, to model the utility.

On the one hand, as network states of any two adjacent slots can be seen as independent in the time-discrete RAN slicing system, we focus on the problem formulation in a time slot of a duration of $T$.
On the other hand, during a time slot, channel coefficients followed by the beamforming may change over minislots; as a result, time-varying utility functions in terms of channel coefficients and beamforming should be designed. {Besides, as the CoMP technique can significantly improve the SNR of UEs, we define the following two utility functions}.
\begin{mydef}[eMBB long-term utility function]\label{def:mydef_eMBB_utility}
Over a time slot, the eMBB long-term utility is defined as the time average energy efficiency of serving all eMBB UEs, which is calculated as
\begin{equation}\label{eMBB_long_term_utility}
    \begin{array}{l}
{{\bar U}^e} = \frac{1}{T}\sum\limits_{t = 1}^T {{U^e}(t)}  = \frac{1}{T}\sum\limits_{t = 1}^T {\sum\limits_{s \in {{\mathcal S}^e}} {u_s^e({\bm v_{j,s}}(t))} } \\
 = \frac{1}{T}\sum\limits_{t = 1}^T {\sum\limits_{s \in {{\mathcal S}^e}} {\left( {\sum\limits_{i \in {\mathcal I}_s^e} {SNR_{i,s}^e(t)}  - \eta \sum\limits_{j \in {\mathcal J}} {\bm v_{j,s}^{\rm{H}}(t){\bm v_{j,s}}(t)} } \right)} },
\end{array}
    \end{equation}
where $\eta$ is an energy efficiency coefficient reflecting the tradeoff between the {total transmit} power consumption {of RRHs} and the benefit {quantified by the total SNR of eMBB UEs}.
\end{mydef}
\begin{mydef}[URLLC long-term utility function]\label{eq:mydef_urllc_long_term}
Over a time slot, the URLLC long-term utility is defined as the time average energy efficiency of serving all URLLC UEs, which can be calculated as
\begin{equation}\label{URLLC_long_term_utility}
        \begin{array}{l}
        {{\bar U}^u} = \frac{1}{T}\sum\limits_{t = 1}^T {{U^u}(t)}  = \frac{1}{T}\sum\limits_{t = 1}^T {\sum\limits_{s \in {{\mathcal S}^u}} {u_s^u({\bm g_{ij,s}}(t))} } \\
         = \frac{1}{T}\sum\limits_{t = 1}^T {\sum\limits_{s \in {{\mathcal S}^u}} {( {\sum\limits_{i \in {\mathcal I}_s^u} {SNR_{i,s}^u(t)}  - \eta \sum\limits_{j \in {\mathcal J}} {\sum\limits_{i \in {\mathcal I}_s^u} {\bm g_{ij,s}^{\rm{H}}(t){\bm g_{ij,s}}(t)} } } )} }.
        \end{array}
    \end{equation}
\end{mydef}

With the above description, we can formulate the problem of RAN slicing for bursty URLLC and eMBB service multiplexing as follows
\begin{subequations}\label{eq:original_problem}
\begin{alignat}{2}
& \mathop {{\rm{maximize}}}\limits_{\{ \omega _s^e(\bar t),{\bm v_s}(t),{\bm g_{i,s}}(t)\}}  {\bar U = {\bar U}^e}  + {\hat \rho} {{\bar U}^u} \\
& {\rm s.t:} \text{ } \rm {constraints \text{ } (\ref{eq:eMBB_QoS}),(\ref{eq:RRH_energy}),(\ref{eq:total_bandwidth}) \text{ } are \text{ } satisfied,} \\
& \omega _s^e(\bar t) \ge 0, \forall s \in {\mathcal S}^e.
\end{alignat}
\end{subequations}
where $\hat \rho$ is a slice priority coefficient representing the priority of serving inter-slices, $\bar U$ denotes the long-term total slice utility, beamformers ${\bm v}_s(t) = [{\bm v}_{1,s}(t); \ldots; {\bm v}_{J,s}(t)] \in {\mathbb C}^{JK \times 1}$, and ${\bm g}_{i,s}(t) = [{\bm g}_{i1,s}(t);\ldots;$ ${\bm g}_{iJ,s}(t)] \in {\mathbb C}^{JK \times 1}$.

The mitigation of (\ref{eq:original_problem}) is highly challenging mainly because: {\verb|future channel information is needed|}: the optimization should be conducted at the beginning of the time slot; yet the objective function needs to be exactly computed according to channel information during the time slot; {\verb|two timescale issue|}: the bandwidth $\{\omega_s^e(\bar t)\}$ and the beamformers $\{{\bm v}_s(t)\}$ and $\{{\bm g}_{i,s}(t)\}$ should be optimized at two different time scales. $\{\omega_s^e(\bar t)\}$ needs be optimized at the beginning of the time slot. $\{{\bm v}_s(t)\}$ and $\{{\bm g}_{i,s}(t)\}$ should be optimized at the beginning of each minislot.

In the following sections, we discuss \emph{how to address the challenging problem effectively.}

\section{Problem solution with system generated channel coefficients}
In this section, we resort to an SAA technique \cite{kim2015guide} and a distributed optimization method to tackle the above issues.

\subsection{Sample average approximation}
Owing to the ergodicity of channel fading process over the time slot, the objective function can be approximated as
\begin{equation}\label{eq:objfun_approx}
    \frac{1}{T}\sum\limits_{t = 1}^T {{U^e}(t)}  + \frac{1}{T}\sum\limits_{t = 1}^T {\hat \rho {U^u}(t)}  \approx {E_{\hat {\bm h}}}\left[ {{{\hat U}^e} + \hat \rho {{\hat U}^u}} \right],
\end{equation}
where $\hat {\bm h}$ denotes a set of all channel coefficient samples collected at the beginning of the time slot, {$\hat U^e$ and $\hat U^u$ are functions of $\hat {\bm h}$}.

For SAA, its fundamental idea is to approximate the expectation of a random variable by its sample average. The following proposition shows that if the number of samples $M$ is reasonably large, then for all $m \in {\mathcal M} = \{1, \ldots, M\}$, $\{\bar U_m\}$ converges to $\bar U$ uniformly on the feasible region constructed by constraints (\ref{eq:original_problem}b) and (\ref{eq:original_problem}c).

\begin{proposition}\label{pro:pro_saa_convergence}
Let $\Theta$ be a nonempty compact set formed by constraints (\ref{eq:original_problem}b) and (\ref{eq:original_problem}c), $Y({\bm x},{\hat {\bm h}}) = {\hat U}^e + {\hat \rho} {\hat U}^u$ and ${\bm x} = \{ \omega _s^e(\bar t),{\bm v_s}(t),$ ${\bm g_{i,s}}(t)\}$. For any fixed ${\bm x} \in \Theta$, suppose that there exists $\varepsilon  > 0$ such that the family of random variables $\{Y({\bm y},{\hat {\bm h}}) : {\bm y} \in B({\bm x}, \varepsilon)\}$ is uniformly integrable, where $B({\bm x}, \varepsilon)=\{{\bm y}:||{\bm y} - {\bm x}||_2 \le \varepsilon\}$ denotes the closed ball of radius $\varepsilon$ around $\bm x$. Then $\{\bar U_m\}$ converges to $\bar U$ uniformly on $\Theta$ almost surely as $M \to \infty $.
\end{proposition}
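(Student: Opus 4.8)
The plan is to recognize the claim as a uniform strong law of large numbers (ULLN) over the compact decision set $\Theta$, and to reduce it to three ingredients: (i) compactness of $\Theta$, which is given; (ii) continuity of the integrand $Y(\cdot,\hat{\bm h})$ in $\bm x$ for almost every channel realization; and (iii) a local domination/uniform-integrability property, which is exactly the hypothesis supplied in the statement. First I would note that, since the channel samples $\{\hat{\bm h}_m\}_{m=1}^M$ are drawn i.i.d. at the start of the slot, the SAA estimate $\bar U_M(\bm x) := \frac{1}{M}\sum_{m=1}^M Y(\bm x,\hat{\bm h}_m)$ is an average of i.i.d. copies of $Y(\bm x,\hat{\bm h})$ with mean $\bar U(\bm x)=E_{\hat{\bm h}}[Y(\bm x,\hat{\bm h})]$. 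I would then verify that $Y$ is a Carath\'eodory function: it is measurable in $\hat{\bm h}$ and, because $\hat U^e$ and $\hat U^u$ are built from the received SNRs (Hermitian quadratic forms in the beamformers) and the transmit-power terms $\bm v_{j,s}^{\rm H}\bm v_{j,s}$ and $\bm g_{ij,s}^{\rm H}\bm g_{ij,s}$, it is jointly continuous in $\bm x$. Together with the assumed uniform integrability, this guarantees that $\bar U(\bm x)$ is finite and continuous on $\Theta$.

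The core of the argument is to promote pointwise convergence to uniform convergence. For each fixed $\bm x$, Kolmogorov's SLLN gives $\bar U_M(\bm x)\to\bar U(\bm x)$ almost surely. To obtain uniformity I would control the local oscillation of the integrand by defining $\psi_\delta(\bm x,\hat{\bm h})=\sup_{\bm y\in B(\bm x,\delta)}|Y(\bm y,\hat{\bm h})-Y(\bm x,\hat{\bm h})|$. By continuity, $\psi_\delta(\bm x,\hat{\bm h})\to 0$ as $\delta\downarrow 0$ for a.e.\ $\hat{\bm h}$, and the uniform-integrability hypothesis over $B(\bm x,\varepsilon)$ lets me pass this limit through the expectation (a dominated-convergence-type step), so that $E[\psi_\delta(\bm x,\hat{\bm h})]$ can be made smaller than any prescribed tolerance by choosing $\delta$ small. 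Applying the SLLN to the i.i.d.\ variables $\psi_\delta(\bm x,\hat{\bm h}_m)$ then shows $\frac{1}{M}\sum_{m}\psi_\delta(\bm x,\hat{\bm h}_m)\to E[\psi_\delta(\bm x,\hat{\bm h})]$ almost surely, which bounds the fluctuation of $\bar U_M$ over the ball $B(\bm x,\delta)$.

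I would then glue the local estimates using compactness. Cover $\Theta$ by finitely many balls $B(\bm x_k,\delta_k)$, $k=1,\ldots,K$, chosen so that each $E[\psi_{\delta_k}(\bm x_k,\cdot)]$ is below the target tolerance. For any $\bm y\in B(\bm x_k,\delta_k)$, a triangle inequality decomposes $|\bar U_M(\bm y)-\bar U(\bm y)|$ into the oscillation term $|\bar U_M(\bm y)-\bar U_M(\bm x_k)|$ (bounded by $\frac{1}{M}\sum_{m}\psi_{\delta_k}(\bm x_k,\hat{\bm h}_m)$), the pointwise error $|\bar U_M(\bm x_k)-\bar U(\bm x_k)|$, and the continuity term $|\bar U(\bm x_k)-\bar U(\bm y)|$ (bounded by $E[\psi_{\delta_k}(\bm x_k,\cdot)]$). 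Because the cover is finite, all $K$ pointwise-SLLN statements and all $K$ oscillation-SLLN statements hold simultaneously on one common almost-sure event; letting $M\to\infty$ and then shrinking the tolerance yields $\sup_{\bm x\in\Theta}|\bar U_M(\bm x)-\bar U(\bm x)|\to 0$ almost surely.

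The main obstacle is precisely this upgrade from pointwise to uniform almost-sure convergence: the null set on which the SLLN fails depends on $\bm x$, and there are uncountably many $\bm x\in\Theta$, so one cannot simply take a union of null sets. The finite-subcover construction circumvents this, and it succeeds only because the uniform-integrability assumption supplies the dominated-convergence step that forces the expected local oscillation $E[\psi_\delta(\bm x,\cdot)]$ to vanish as $\delta\downarrow 0$; without that integrability the mean oscillation need not be controllable and the uniform conclusion could fail. Since this is the standard ULLN for Carath\'eodory integrands over a compact set, I would either reproduce the covering argument as above or invoke the corresponding stochastic-programming ULLN directly.
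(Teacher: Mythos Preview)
Your proposal is correct and is essentially the same approach the paper relies on: the paper does not spell out an argument at all but simply omits the proof and points to the standard SAA convergence result in the stochastic-programming literature. What you have written is precisely that standard uniform-law-of-large-numbers argument (compactness of $\Theta$, Carath\'eodory continuity of $Y$, local uniform integrability, finite-subcover upgrade from pointwise SLLN to uniform convergence), so your sketch is more detailed than, but fully consistent with, the paper's treatment.
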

\begin{proof}
We omit the proof here as a similar proof can be found in the convergence proof of SAA in \cite{kim2015guide}.
\end{proof}

Based on the conclusion in Proposition 1, given a set of samples of channel coefficients $\{{\bm h}_m\}$ with ${{\bm h}_m} = [{\bm h}_{11,1m};\ldots;{\bm h}_{1J,sm};$ $\ldots; {\bm h}_{(N^e+N^u)J,(|{\mathcal S}^e|+|{\mathcal S}^u|)m}]$ that are assumed to be independent and identically distributed {(i.i.d)}, the original problem (\ref{eq:original_problem}) can be approximated as a single timescale one
\begin{subequations}\label{eq:sample_approx_problem}
\begin{alignat}{2}
& \mathop {{\rm{maximize}}}\limits_{\{ \omega _{sm}^e,{\omega_s^e},{\bm v_{sm}},{\bm g_{i,sm}} \}\hfill} \{\bar U_m\} = \frac{1}{M}\sum\limits_{m = 1}^M {U_m^e }  + \frac{\hat \rho}{M}\sum\limits_{m = 1}^M {U_m^u }  \\
& {\rm s.t:} \text{ }  \omega_{sm}^e = \omega_s^e, \forall s \in {\mathcal S}^e, \forall m \in {\mathcal M}, \\
& \sum\nolimits_{s \in {{\mathcal S}^e}} {{{\bm v}_{j,sm}^{\rm H}}{{\bm v}_{j,sm}}}  + \nonumber \\ 
& \qquad \sum\nolimits_{s \in {\mathcal S}^u} {\sum\nolimits_{i \in {\mathcal I}_s^u} {{{\bm g}_{ij,sm}^{\rm H}}{{\bm g}_{ij,sm}}} }  \le {E_j}, j \in {\mathcal J}, m \in {\mathcal M}, \\
& \sum\nolimits_{s \in {{\mathcal S}^e}} {\omega_{sm}^e}  + W^u(\bm r_m)  \le W, m \in {\mathcal M}, \allowdisplaybreaks[4] \\
& \gamma _{i,sm}^e \ge C_s^{th}, \forall i \in {\mathcal I}_s^e, s \in {\mathcal S}^e, m \in {\mathcal M}, \\
& {\omega_{sm}^e} \ge 0, s \in {\mathcal S}^e, m \in {\mathcal M},
\end{alignat}
\end{subequations}
where $\{\cdot\}_{m}$ denotes a variable corresponding to the $m$-th coefficient sample ${\bm h}_m$. The constraint (\ref{eq:sample_approx_problem}b) is imposed to explicitly describe the two timescale issue of the original problem (\ref{eq:original_problem}). {Note that, the optimization of (\ref{eq:sample_approx_problem}) is not conducted at each minislot $t$; thus, the minislot index $t$ is not involved in (\ref{eq:sample_approx_problem}).}

We consider $\{\omega_{sm}^e\}$ as a family of local variables and $\{\omega_s^e\}$ as a family of global variables in (\ref{eq:sample_approx_problem}). In this case, (\ref{eq:sample_approx_problem}) can be effectively mitigated by distributed optimization routines, such as ADMM \cite{boyd2011distributed}; that is, distributedly optimizing the local variables and then forcing them to the global variables at convergence.


\subsection{Distributed optimization}
According to the fundamental principle of ADMM, the ADMM for (\ref{eq:sample_approx_problem}) can be derived from the following augmented partial Lagrange problem
\begin{subequations}\label{eq:admm_problem}
\begin{alignat}{2}
& \mathop {{\rm{minimize}}}\limits_{\{{ \omega _{sm}^e},{\omega_s^e},{\bm v_{sm}},{\bm g_{i,sm}}\}\hfill}
\sum\limits_{m = 1}^M {\left\{ {- \frac{{U_m^e}}{M} - \frac{{\hat \rho U_m^u}}{M} + } \right.} \nonumber \\
& \left. {\sum\limits_{s \in {{\mathcal S}^e}} {\left[ {\psi _{sm}\left( {\omega _{sm}^e - \omega _s^e} \right) + \frac{\mu }{2}{{\left\| {\omega _{sm}^e - \omega _s^e} \right\|}_2^2}} \right]} } \right\}  \\
& {\rm s.t:} \text{ } \rm{constraints} \text{ } (\ref{eq:sample_approx_problem}c)-(\ref{eq:sample_approx_problem}f) \text{ } \rm{are} \text{ } satisfied.
\end{alignat}
\end{subequations}
where, $\psi_{sm}$ is a Lagrangian multiplier, $\mu$ is a penalty coefficient.

For all channel samples, the distributed framework of mitigating (\ref{eq:admm_problem}) can then be summarized to alternatively calculate equations from (\ref{eq:arg_lagarangian}) to (\ref{eq:psi_update}).
\begin{subequations}\label{eq:arg_lagarangian}
\begin{alignat}{2}
& \left\{ {\begin{array}{*{20}{l}}
{\omega _{sm}^{e(k + 1)},}\\
{{\bm v}_{sm}^{(k + 1)},{\bm g}_{i,sm}^{(k + 1)}}
\end{array}} \right\} = \mathop {{\rm{argmin}}}\limits_{\left\{ {\scriptstyle\omega _{sm}^e,\hfill\atop
\scriptstyle{{\bm v}_{sm}},{{\bm g}_{i,sm}}\hfill} \right\}} \bar {\mathcal L} (\omega _{sm}^e,{{\bm v}_{sm}},{{\bm g}_{i,sm}})  \\
& {\rm s.t:} \text{ } {\rm for} \text{ } {\rm a} \text{ } {\rm sample} \text{ } m, \text{ } (\ref{eq:sample_approx_problem}c)-(\ref{eq:sample_approx_problem}f) \text{ } \rm{are} \text{ } satisfied.
\end{alignat}
\end{subequations}
\begin{equation}\label{eq:omega_update}
\omega _s^{e(k + 1)} = \frac{1}{M}\sum\limits_{m = 1}^M {\left( {\omega _{sm}^{e(k + 1)} + \frac{1}{\mu }\psi _{sm}^{(k)}} \right)}, \text{ } \forall s\in {\mathcal S}^e,
\end{equation}
\begin{equation}\label{eq:psi_update}
    \psi _{sm}^{(k + 1)} = \psi _{sm}^{(k)} + \mu \left( {\omega _{sm}^{e(k + 1)} - \omega _s^{e(k + 1)}} \right), \text{ } \forall s\in {\mathcal S}^e,
\end{equation}
where,
\begin{equation}\label{eq:average_Lagrangian_func}
\begin{array}{l}
    \bar{\mathcal{L}}(\omega_{sm}^e,\bm v_{sm}, \bm g_{i,sm}) =  {- \frac{{U_m^e}}{M} - \frac{{\hat \rho U_m^u}}{M} +} \\
    { \sum\limits_{s \in {{\mathcal S}^e}} {\left[ {\psi _{sm}^{(k)}\left( {\omega _{sm}^e - \omega _s^{e(k)}} \right) + \frac{\mu }{2}{{\left\| {\omega _{sm}^e - \omega _s^{e(k)}} \right\|}_2^2}} \right]} }.
    \end{array}
\end{equation}

In our RAN slicing system, the RAN-C is responsible for executing the distributed framework, and $M$ virtual machines (VMs) are activated to conduct (\ref{eq:arg_lagarangian}) and (\ref{eq:psi_update}). An aggregation VM (AVM) is utilized to aggregate the local variables. Additionally, in this framework, local dual variables $\{\psi_{sm}\}$ are updated to drive local variables $\{\omega_{sm}^e\}$ into consensus, and quadratic items in (\ref{eq:arg_lagarangian}) help pull $\{\omega_{sm}^e\}$ towards their average value.

Unfortunately, the mitigation of (\ref{eq:arg_lagarangian}) on each VM is difficult due to the existence of non-convex constraints. We next attempt to tackle the non-convexity of (\ref{eq:arg_lagarangian}).

\subsection{Semidefinite relaxation scheme}
Let ${\bm V}_{sm} = {\bm v}_{sm} {\bm v}_{sm}^{\rm H} \in {\mathbb R}^{JK \times JK}$ for all $s \in {\mathcal S}^e$, $m \in {\mathcal M}$, and ${\bm G}_{i,sm} = {\bm g}_{i,sm} {\bm g}_{i,sm}^{\rm H} \in {\mathbb R}^{JK \times JK}$ for all $i \in {\mathcal I}_s^u$, $s \in {\mathcal S}^u$, $m \in {\mathcal M}$. Next, if we recall the properties: ${{\bm V_s} = {\bm v_s}\bm v_s^H \Leftrightarrow {\bm V_s} \succeq 0}, \text{ } {{\rm rank}({\bm V_s}) \le 1}$, and ${{\bm G_{i,sm}} = {\bm g_{i,sm}}\bm g_{i,sm}^H \Leftrightarrow {\bm G_{i,sm}} \succeq 0}, \text{ } {{\rm rank}({\bm G_{i,sm}}) \le 1}$, (\ref{eq:arg_lagarangian}) can then be reformulated as
\begin{subequations}\label{eq:arg_lagarangian_reformulated}
\begin{alignat}{2}
& \left\{ {\begin{array}{*{20}{l}}
{\omega _{sm}^{e(k + 1)},}\\
{\bm V_{sm}^{(k + 1)},\bm G_{i,sm}^{(k + 1)}}
\end{array}} \right\} = \mathop {{\rm{argmin}}}\limits_{\left\{ {\scriptstyle\omega _{sm}^e,\hfill\atop
\scriptstyle{\bm V_{sm}},{\bm G_{i,sm}}\hfill} \right\}} \bar {\mathcal L} (\omega _{sm}^e,{\bm V_{sm}},{\bm G_{i,sm}})   \\
& {\rm s.t:} \text{ } \omega _{sm}^e{\log _2}( {1 + \frac{{{\rm tr}({{\bm H}_{i,sm}}{\bm V_{sm}})}}{{\sigma _{i,s}^2}}} ) \ge C_s^{th},\forall s \in {\mathcal S}^e,i \in {\mathcal I}_s^e, \allowdisplaybreaks[4] \\
& \sum\limits_{s \in {{\mathcal S}^e}} {{\rm tr}({{\bm Z}_j{\bm V}_{sm}})}  + \sum\limits_{s \in {{\mathcal S}^u}} {\sum\limits_{i \in {\mathcal I}_s^u} {{\rm tr}({{\bm Z}_j}{{\bm G}_{i,sm}})} }  \le {E_j},\forall j\in {\mathcal J}, \\
& {\bm V_{sm}} \succeq 0, \forall s \in {\mathcal S}^e, \\
& {{\bm G}_{i,sm}} \succeq 0, \forall i \in {\mathcal I}_{s}^u, s \in {\mathcal S}^u,  \\
& {{\rm rank}({\bm V_{sm}}) \le 1}, \forall s \in {\mathcal S}^e,  \\
& {{\rm rank}({{\bm G}_{i,sm}}) \le 1}, \forall i \in {\mathcal I}_{s}^u, s \in {\mathcal S}^u, \\
& {\rm constraints} \text{ } (\ref{eq:sample_approx_problem}\rm{d}),(\ref{eq:sample_approx_problem}\rm{f}) \text{ } {\rm are} \text{ } {\rm satisfied},
\end{alignat}
\end{subequations}
where ${\bm H}_{i,sm} = {\bm h}_{i,sm} {\bm h}_{i,sm}^{\rm H} \in {\mathbb R}^{JK \times JK}$, ${\bm h}_{i,sm} = [{\bm h}_{i1,sm};\ldots;$ ${\bm h}_{iJ,sm}] \in {\mathbb C}^{JK \times 1}$, ${\bm Z}_j \in {\mathbb R}^{JK \times JK}$ is a square matrix with $J \times J$ blocks, and each block in ${\bm Z}_j$ is a $K \times K$ matrix. Besides, in ${\bm Z}_j$, the block in the $j$-th row and $j$-th column is a $K \times K$ identity matrix, and all other blocks are zero matrices.

As power matrices ${\bm V}_{sm}$ ($s\in {\mathcal S}^e$, $m\in {\mathcal M}$) and ${\bm G}_{i,sm}$ ($i \in {\mathcal I}_s^u$, $s \in {\mathcal S}^u$, $m \in {\mathcal M}$) are positive semidefinite, we then resort to the SDR scheme to handle the low-rank non-convex constraints (\ref{eq:arg_lagarangian_reformulated}f) and (\ref{eq:arg_lagarangian_reformulated}g). That is, directly drop the constraints (\ref{eq:arg_lagarangian_reformulated}f) and (\ref{eq:arg_lagarangian_reformulated}g). However, owing to the relaxation, power matrices ${\bm V}_{sm}$ and ${\bm G}_{i,sm}$ obtained by mitigating the problem (\ref{eq:arg_lagarangian_reformulated}) without low-rank constraints will not satisfy the low-rank constraint in general. This is due to the fact that the (convex) feasible set of the relaxed (\ref{eq:arg_lagarangian_reformulated}) is a superset of the (non-convex) feasible set of (\ref{eq:arg_lagarangian_reformulated}).
If they satisfy, then the relaxation is tight; if not, then some manipulation, e.g., \emph{a randomization/scale method} \cite{ma2010semidefinite}, should be performed on them to obtain their approximate solutions.

Although non-convex constraints are removed, constraints related to ${W^u(\bm r_m)}$ are complicated, which hinders the optimization of the relaxed (\ref{eq:arg_lagarangian_reformulated}). Therefore, we next discuss \emph{how to equivalently transform the complicated constraints via a variable slack scheme}.

\subsection{Variable slack scheme}
From (\ref{eq:URLLC_bandwidth}), we observe that $W^u({\bm r}_m)$ is a quadratic function with respect to (w.r.t) $\bm r_m$. Therefore, via introducing a family of slack variables $\bm f_m = \{f_{i,sm}^u\}$, $i \in {\mathcal I}_s^u$, $s \in {\mathcal S}^u$, $m\in {\mathcal M}$, the following lemma shows the equivalent expressions of (\ref{eq:sample_approx_problem}d).
\begin{lemma}\label{lem:urllc_bandwidth_transformation}
    Given the family of slack variables $\bm f_m = \{f_{i,sm}^u\}$, (\ref{eq:sample_approx_problem}d) is equivalent to the following inequalities,
    \begin{equation}\label{eq:slack_equation1}
   {\sum\limits_{s \in {{\mathcal S}^e}} {\omega _{sm}^e}  + A(\bm f_m) + c(\varsigma, \alpha)\sqrt{B(\bm f_m)} \le W},
\end{equation}
and
\begin{equation}\label{eq:slack_equation2}
    f_{i,sm}^u \ge r_{i,sm}^u,
\end{equation}
    for all $i \in {\mathcal I}_s^u$, $s \in {\mathcal S}^u$, $m \in {\mathcal M}$.
\end{lemma}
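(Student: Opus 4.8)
The plan is to prove the stated equivalence as an equality of feasible sets: for fixed $\omega_{sm}^e$ and $\bm r_m$, constraint (\ref{eq:sample_approx_problem}d) holds if and only if there exists a slack vector $\bm f_m$ simultaneously satisfying (\ref{eq:slack_equation1}) and (\ref{eq:slack_equation2}). First I would invoke Lemma \ref{lem:upper_bound_urllc_bandwidth}: since $W^u(\bm r_m)$ is defined as the \emph{minimum} upper bound, the bound (\ref{eq:URLLC_bandwidth}) is attained with equality, so $W^u(\bm r_m) = A(\bm r_m) + c(\varsigma,\alpha)\sqrt{B(\bm r_m)}$ and (\ref{eq:sample_approx_problem}d) reads $\sum_{s\in{\mathcal S}^e}\omega_{sm}^e + A(\bm r_m) + c(\varsigma,\alpha)\sqrt{B(\bm r_m)}\le W$. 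Writing $g(\bm x) := A(\bm x) + c(\varsigma,\alpha)\sqrt{B(\bm x)}$, the whole claim reduces to comparing $g(\bm r_m)$ with $g(\bm f_m)$ under the componentwise ordering $\bm f_m \ge \bm r_m \ge \bm 0$.

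The key structural fact I would establish next is the coordinatewise monotonicity of $g$ on the nonnegative orthant. Because $\lambda_s>0$, $\kappa>0$ and $D_s>0$, every term of $A(\bm x)=\sum_{s}\sum_i \lambda_s x_{i,s}/\kappa$ and of $B(\bm x)=\sum_{s}\sum_i \lambda_s x_{i,s}^2/(\kappa^2 D_s)$ is nondecreasing in each $x_{i,s}\ge 0$; hence so is $\sqrt{B(\bm x)}$. It remains to check the sign of the prefactor: under the hypotheses $\varsigma>\alpha$ and $0<\alpha,\varsigma\le 1$, the factor $(\alpha-\varsigma\alpha)/(\varsigma-\alpha)=\alpha(1-\varsigma)/(\varsigma-\alpha)$ in (\ref{eq:calculate_c}) is nonnegative, so $c(\varsigma,\alpha)\ge 0$ and $g$ is indeed componentwise nondecreasing.

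With monotonicity in hand the two implications are short. For necessity, if (\ref{eq:sample_approx_problem}d) holds I would simply take $\bm f_m=\bm r_m$; then (\ref{eq:slack_equation2}) holds trivially and $g(\bm f_m)=g(\bm r_m)$ turns (\ref{eq:slack_equation1}) into (\ref{eq:sample_approx_problem}d). For sufficiency, assume (\ref{eq:slack_equation1}) and (\ref{eq:slack_equation2}); since $f_{i,sm}^u\ge r_{i,sm}^u\ge 0$, monotonicity gives $g(\bm r_m)\le g(\bm f_m)$, and chaining this with (\ref{eq:slack_equation1}) yields $\sum_{s}\omega_{sm}^e+g(\bm r_m)\le\sum_{s}\omega_{sm}^e+g(\bm f_m)\le W$, which is exactly (\ref{eq:sample_approx_problem}d).

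I expect the only genuinely delicate point to be pinning down that $W^u(\bm r_m)$ may be replaced by the tight expression $g(\bm r_m)$ and that $c(\varsigma,\alpha)\ge 0$, since the monotonicity of $g$ — and hence the entire argument — hinges on the nonnegativity of that prefactor; the remaining inclusions are immediate. I would also record the motivation that makes the slack worthwhile in the sequel: $r_{i,sm}^u$ is itself a complicated SNR-dependent quantity (via Lemma \ref{lem:lemma_channel_use}), so the slack $\bm f_m$ decouples the quadratic-inside-a-square-root bandwidth constraint (\ref{eq:slack_equation1}) from the channel-use relation (\ref{eq:slack_equation2}), which is what enables the subsequent SDP reformulation.
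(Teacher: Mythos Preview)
Your proof is correct and follows essentially the same approach as the paper: both establish the equivalence by noting that when the slack is active ($\bm f_m=\bm r_m$) the two systems coincide, and that any non-active slack can be tightened down to $\bm r_m$ without violating (\ref{eq:slack_equation1}) because $g(\bm x)=A(\bm x)+c(\varsigma,\alpha)\sqrt{B(\bm x)}$ is componentwise nondecreasing. Your version is in fact more explicit than the paper's, since you spell out the monotonicity and verify $c(\varsigma,\alpha)\ge 0$, a point the paper leaves implicit.
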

\begin{proof}
Please refer to Appendix C.
\end{proof}

Besides, we can know that the objective function (\ref{eq:arg_lagarangian_reformulated}a) is convex. This is because it is linear w.r.t variables ${\bm V}_{sm}$ and ${\bm G}_{i,sm}$ with an addition of affine terms and nonnegative quadratic terms w.r.t $\omega_{sm}^e$.
(\ref{eq:arg_lagarangian_reformulated}c) is an affine constraint. Other constraints are non-linear. Based on the above equivalent transformation, we show that (\ref{eq:arg_lagarangian_reformulated}) can be further transformed into a standard convex problem in the following lemma.
\begin{lemma}\label{lem:lemma_SDP_transformation}
    By introducing a family of slack variables, the problem (\ref{eq:arg_lagarangian_reformulated}) without low-rank constraints can be equivalently transformed into the following SDP problem.
    \begin{subequations}\label{eq:standard_transformed_bandwidth_n_beamforming}
\begin{alignat}{2}
& \left\{ {\begin{array}{*{20}{l}}
{\omega _{sm}^{e(k + 1)},\bm V_{sm}^{(k + 1)}}\\
{\bm G_{i,sm}^{(k + 1)},\ldots,\tau_{i,sm}^{u(k+1)}}
\end{array}} \right\} = \mathop {{\rm{argmin}}}\limits_{\left\{ {\scriptstyle\omega _{sm}^e,{\bm V_{sm}},\hfill\atop
\scriptstyle{\bm G_{i,sm}},\ldots,{\tau_{i,sm}^u}\hfill} \right\}} \bar {\mathcal L} (\ldots)  \\
& {\rm s.t:} \text{ } \rm {affine \text{ } constraints \text{ }  (\ref{eq:arg_lagarangian_reformulated}c),(36),(40),(44), (46) }, \\
& \rm {satisfy \text{ } quadratic \text{ } cone \text{ } constraints \text{ }  (38), (39)}, \\
& \rm {satisfy \text{ } exponential \text{ } cones \text{ } (37), (41), (43), (47), (48)}, \\
& \rm {constraints \text{ }  (\ref{eq:arg_lagarangian_reformulated}e),(\ref{eq:arg_lagarangian_reformulated}f) \text{ } are \text{ } satisfied.}
\end{alignat}
\end{subequations}
\end{lemma}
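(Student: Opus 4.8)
The plan is to turn every constraint of the SDR of (\ref{eq:arg_lagarangian_reformulated}) (that is, (\ref{eq:arg_lagarangian_reformulated}) with the rank constraints (\ref{eq:arg_lagarangian_reformulated}f)--(\ref{eq:arg_lagarangian_reformulated}g) dropped) that is neither affine nor PSD into a standard second-order (quadratic) or exponential cone, introducing one slack variable per offending nonlinear term. The guiding principle is that each slack is oriented so that, after projecting it out, the feasible set is left unchanged: I always couple a slack by an inequality whose direction, combined with the monotonicity of the term it replaces, lets me recover the original constraint in one direction and set the slack to equality in the other. This makes the construction an exact reformulation rather than a mere relaxation, and since the objective $\bar{\mathcal L}$ is never modified, the two optimal values coincide.

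First I would discharge the URLLC bandwidth budget (\ref{eq:sample_approx_problem}d). By Lemma~\ref{lem:urllc_bandwidth_transformation} it is equivalent to (\ref{eq:slack_equation1}) and (\ref{eq:slack_equation2}). Because $B(\bm f_m)$ in (\ref{eq:URLLC_bandwidth}) is a nonnegative weighted sum of the squares $(f_{i,sm}^u)^2$, the term $\sqrt{B(\bm f_m)}$ is a weighted Euclidean norm of $\bm f_m$; hence (\ref{eq:slack_equation1}), rewritten as $c(\varsigma,\alpha)\sqrt{B(\bm f_m)} \le W - \sum_{s}\omega_{sm}^e - A(\bm f_m)$ with affine right-hand side, is exactly a second-order cone constraint in $(\omega_{sm}^e,\bm f_m)$.

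Next I would recast the eMBB rate constraint (\ref{eq:arg_lagarangian_reformulated}b). With $\theta_{i,sm} = \mathrm{tr}(\bm H_{i,sm}\bm V_{sm})/\sigma_{i,s}^2$ affine in $\bm V_{sm}$, I introduce $y_{i,sm}$ with $y_{i,sm} \le \log_2(1+\theta_{i,sm})$ --- equivalently $2^{y_{i,sm}} \le 1+\theta_{i,sm}$, an exponential cone --- and impose $\omega_{sm}^e y_{i,sm} \ge C_s^{th}$. Taking logarithms of this nonnegative product gives $\ln\omega_{sm}^e + \ln y_{i,sm} \ge \ln C_s^{th}$, which splits into two exponential cones and one affine inequality; choosing $y_{i,sm} = \log_2(1+\theta_{i,sm})$ recovers (\ref{eq:arg_lagarangian_reformulated}b) exactly. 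I would then feed (\ref{eq:slack_equation2}) through the bound of Lemma~\ref{lem:lemma_channel_use}: slack the URLLC capacity by $\tau_{i,sm}^u \le \log_2(1+SNR_{i,s}^u)$ (an exponential cone, with $SNR_{i,s}^u = \mathrm{tr}(\bm H_{i,sm}\bm G_{i,sm})/(\phi\sigma_{i,s}^2)$ affine in $\bm G_{i,sm}$). Because the right-hand side of (\ref{eq:URLLC_channel_use}) is decreasing in the capacity, substituting $\tau_{i,sm}^u$ for $C(\cdot)$ yields a valid upper bound that is tight precisely when $\tau_{i,sm}^u = C(\cdot)$.

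The main obstacle is the channel-use term itself, since the right-hand side of (\ref{eq:URLLC_channel_use}) stacks the reciprocals $1/\tau_{i,sm}^u$ and $1/(\tau_{i,sm}^u)^2$, the square root $\sqrt{1 + 4L_{i,sm}^u\tau_{i,sm}^u/(Q^{-1}(\beta))^2}$, and a product of the last two. I would introduce upper-bound slacks $p_{1}\ge 1/\tau_{i,sm}^u$ and $p_{2}\ge 1/(\tau_{i,sm}^u)^2$ (each a hyperbolic, hence rotated second-order cone, constraint), an upper-bound slack $\zeta \ge \sqrt{1 + 4L_{i,sm}^u\tau_{i,sm}^u/(Q^{-1}(\beta))^2}$ (an ordinary second-order cone), and a further slack $w \ge p_2\zeta$ (again a rotated second-order cone) for the bilinear product, after which $f_{i,sm}^u \ge L_{i,sm}^u p_1 + \frac{1}{2}(Q^{-1}(\beta))^2 p_2 + \frac{1}{2}(Q^{-1}(\beta))^2 w$ is affine in the new variables. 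Keeping every auxiliary relation as an upper bound preserves $f_{i,sm}^u \ge r_{i,sm}^u$, while setting each slack to equality in the reverse direction shows that no feasible point is lost. Collecting the PSD cones (\ref{eq:arg_lagarangian_reformulated}d)--(\ref{eq:arg_lagarangian_reformulated}e), the affine couplings, the second-order cones and the exponential cones then produces a problem in the advertised SDP/conic form whose feasible set, projected onto $(\omega_{sm}^e,\bm V_{sm},\bm G_{i,sm})$, coincides with that of the relaxed (\ref{eq:arg_lagarangian_reformulated}); since $\bar{\mathcal L}$ is unchanged, the two share the same optimum, which is the claim.
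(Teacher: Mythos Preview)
Your overall strategy---slack each nonlinear piece with the right monotone direction, then argue equivalence by setting slacks to equality---is sound, and your treatment of (\ref{eq:slack_equation1}) as a second-order cone and of the eMBB rate constraint via an exponential cone plus a product bound is correct (the paper handles $\omega_{sm}^e\,\bar\lambda_{i,sm}\ge C_s^{th}$ by a single rotated quadratic cone rather than logarithms, but your route is valid too).

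The gap is in the channel-use block.  Two of the ``cone'' constraints you write down are in fact \emph{non-convex}:
\begin{itemize}
\item $\zeta \ge \sqrt{1+4L\,\tau/(Q^{-1}(\beta))^{2}}$ is not a second-order cone.  The map $\tau\mapsto\sqrt{1+a\tau}$ with $a>0$ is concave, so its \emph{hypograph} is convex while the \emph{epigraph} you need (an upper-bound slack) is not.  Concretely, $(\zeta,\tau)=(1,0)$ and $(2,3/a)$ both lie on the boundary, but their midpoint violates the inequality.
\item $w\ge p_{2}\,\zeta$ is not a rotated second-order cone either.  A rotated cone is $2x_{1}x_{2}\ge \lVert x_{3:n}\rVert^{2}$, i.e.\ the \emph{hypograph} of the geometric mean; the epigraph of a bilinear product of two nonnegative variables is non-convex (take $(1,1,1)$ and $(1,2,\tfrac12)$---both satisfy $w\ge p_{2}\zeta$, the midpoint does not).
\end{itemize}
Because of these two constraints, the program you assemble is not convex, let alone an SDP, so the claimed equivalence with a tractable conic problem fails.

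The paper sidesteps exactly this obstacle by switching to \emph{logarithmic} variables before slacking: it sets $e^{d}\ge 1/C(SNR_{i,sm}^{u})$, so that the three terms in the right-hand side of (\ref{eq:URLLC_channel_use}) become $e^{d+\ln L}$, $e^{2d+\ln(Y/2)}$, $e^{2d+\nu+\ln(Y/2)}$ with $e^{\nu}\ge\sqrt{1+4Le^{-d}/Y}$.  In log-space, products and reciprocals turn into sums, and both ``$f\ge$ sum of exponentials'' and ``$2\nu\ge\ln(e^{0}+e^{-d+\ln(4L/Y)})$'' are standard log-sum-exp inequalities, which are exponential-cone representable via the rule $t\ge\ln\!\sum_i e^{x_i}\Longleftrightarrow\{\sum_i\mu_i\le 1,\ (\mu_i,1,x_i-t)\in\mathcal K_{\exp}\}$.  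That substitution is the missing idea in your argument; without it, the bilinear coupling $p_2\zeta$ and the concave square-root cannot be cast as conic constraints in the direction you require.
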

\begin{proof}
Please refer to Appendix D.
\end{proof}

Then, some standard optimization tools such as CVX \cite{CVX} and MOSEK \cite{MOSEK} can be used to mitigate (\ref{eq:standard_transformed_bandwidth_n_beamforming}) effectively.
We can summarize the steps of mitigating (\ref{eq:sample_approx_problem}) in Algorithm \ref{alg1}.

\begin{algorithm}
\caption{Distributed bandwidth optimization algorithm, DBO}
\label{alg1}
\begin{algorithmic}[1]
\STATE \textbf{Input:} randomly initialize ${\omega_{s}^{e(0)}}$, $\psi _{sm}^{(0)}$, $\bm V_{sm}^{(0)}$, for all $i \in {\mathcal I}_s^e$, $s \in {\mathcal S}^e$ and $\bm G_{i,sm}^{(0)}$, for all $i \in {\mathcal I}_s^u$, $s \in {\mathcal S}^u$, ${\bm H}_{i,sm}$, for all $m \in {\mathcal M}$, let $k_{\rm max} =250$.
\STATE \textbf{Output:} $\{\omega_{s}^e\}$
\FOR{$k = 1 : k_{\max}$}
\FOR{each VM $m \in {\mathcal M}$ in parallel}
\STATE VM $m$ solves the problem (\ref{eq:standard_transformed_bandwidth_n_beamforming}) to obtain $\omega_{sm}^{e(k+1)}$ and sends it to the AVM.
\ENDFOR
\STATE After collecting all $\{\omega_{sm}^{e(k+1)}\}$, the AVM aggregates $\omega _s^{e(k + 1)}$ using (\ref{eq:omega_update}) and broadcasts the updated $\omega _s^{e(k + 1)}$ to each VM.
\FOR{each VM $m \in {\mathcal M}$ in parallel}
\STATE VM $m$ computes $\psi _{sm}^{(k + 1)}$ using (\ref{eq:psi_update}) and sends $\psi _{sm}^{(k + 1)}$ to the AVM.
\ENDFOR
\IF{convergence or reach the maximum iteration times $k_{\rm max}$}
\STATE {Break.}
\ENDIF
\ENDFOR
\end{algorithmic}
\end{algorithm}

\subsection{Performance analysis}
In this subsection, we analyze the performance of DBO. We first present a lemma about the optimality of solving (\ref{eq:arg_lagarangian_reformulated}) and then state the computational complexity and the convergence of DBO.

If we denote $\bm G_{i,sm}^{\star}$ and $\bm V_{sm}^{\star}$ as solutions to (\ref{eq:standard_transformed_bandwidth_n_beamforming}), then the following lemma shows the tightness of exploring the SDR scheme on (\ref{eq:arg_lagarangian_reformulated}).
\begin{lemma}\label{lem:SDR}
    {For all $i \in {\mathcal I}_s^u$, $s \in {\mathcal S}^u$, $m \in {\mathcal M}$, the SDR for both ${\bm V}_{sm}$ and ${\bm G}_{i,sm}$} in problem (\ref{eq:arg_lagarangian_reformulated}) is tight, that is,
    \begin{equation}\label{eq:rank_Vs_Gis}
        \begin{array}{l}
        {\rm{rank}}(\bm V_{sm}^ \star ) \le 1,\forall i \in {\mathcal I}_s^e,s \in {{\mathcal S}^e}, \\
        {\rm{rank}}(\bm G_{i,sm}^ \star ) \le 1,\forall i \in {\mathcal I}_s^u,s \in {{\mathcal S}^u}.
        \end{array}
    \end{equation}

    Moreover, $\bm G_{i,sm}^{\star}$ and $\bm V_{sm}^{\star}$ are optimal solutions to (\ref{eq:arg_lagarangian_reformulated}).
\end{lemma}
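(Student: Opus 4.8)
The plan is to establish optimality and rank-one tightness simultaneously through the Karush--Kuhn--Tucker (KKT) conditions of the relaxed program, i.e.\ (\ref{eq:arg_lagarangian_reformulated}) with the low-rank constraints (\ref{eq:arg_lagarangian_reformulated}f)--(\ref{eq:arg_lagarangian_reformulated}g) dropped, which by Lemma \ref{lem:lemma_SDP_transformation} is the convex program (\ref{eq:standard_transformed_bandwidth_n_beamforming}). First I would verify that a strictly feasible point exists (scaling down any feasible beamformer keeps the per-RRH power and the PSD constraints strict, while the rate and bandwidth constraints can be met with a small bandwidth margin), so Slater's condition holds and the KKT conditions are necessary and sufficient for optimality. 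I then introduce the positive semidefinite dual matrices $\bm{Y}_{sm}\succeq 0$ for (\ref{eq:arg_lagarangian_reformulated}d) and $\bm{\Lambda}_{i,sm}\succeq 0$ for (\ref{eq:arg_lagarangian_reformulated}e), together with nonnegative multipliers $\{\delta_{jm}\}$ for the power constraints (\ref{eq:arg_lagarangian_reformulated}c) and $\{a_{i,sm}\}$ for the eMBB rate constraints (\ref{eq:arg_lagarangian_reformulated}b). The central relations are the complementary-slackness identities $\bm{Y}_{sm}\bm{V}_{sm}^{\star}=\bm{0}$ and $\bm{\Lambda}_{i,sm}\bm{G}_{i,sm}^{\star}=\bm{0}$, from which ${\rm rank}(\bm{V}_{sm}^{\star})\le JK-{\rm rank}(\bm{Y}_{sm})$ and ${\rm rank}(\bm{G}_{i,sm}^{\star})\le JK-{\rm rank}(\bm{\Lambda}_{i,sm})$.

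For the unicast URLLC matrices the argument closes cleanly. Differentiating the Lagrangian with respect to $\bm{G}_{i,sm}$ and using that $\bm{G}_{i,sm}$ enters only the (linear) objective, the trace power term ${\rm tr}(\bm{Z}_j\bm{G}_{i,sm})$, and the SNR-dependent channel-use/bandwidth constraint through the single rank-one matrix $\bm{H}_{i,sm}$, the stationarity condition takes the form $\bm{\Lambda}_{i,sm}=\frac{\hat\rho\eta}{M}\bm{I}+\sum_{j}\delta_{jm}\bm{Z}_j-c_{i,sm}\bm{H}_{i,sm}$, where $c_{i,sm}\ge 0$ collects the positive weights coming from both the utility and the channel-use constraint, all of which push ${\rm tr}(\bm{H}_{i,sm}\bm{G}_{i,sm})$ upward. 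Since $\frac{\hat\rho\eta}{M}\bm{I}\succ 0$ whenever $\eta>0$, the leading block $\bm{D}_{i,sm}=\frac{\hat\rho\eta}{M}\bm{I}+\sum_j\delta_{jm}\bm{Z}_j$ is positive definite, so $\bm{\Lambda}_{i,sm}$ is a full-rank matrix minus a rank-one term and hence ${\rm rank}(\bm{\Lambda}_{i,sm})\ge JK-1$. Complementary slackness then forces ${\rm rank}(\bm{G}_{i,sm}^{\star})\le 1$ (the degenerate case $\bm{G}_{i,sm}^{\star}=\bm{0}$ being excluded whenever the UE carries a positive SNR demand).

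The same differentiation for the multicast matrices gives $\bm{Y}_{sm}=\frac{\eta}{M}\bm{I}+\sum_j\delta_{jm}\bm{Z}_j-\sum_{i\in\mathcal{I}_s^e}b_{i,sm}\bm{H}_{i,sm}$ with $b_{i,sm}\ge 0$ aggregating the objective and rate-constraint contributions, and here lies the main obstacle. Because a single beamformer serves all $I_s^e$ UEs of the slice, the subtracted pencil $\sum_{i}b_{i,sm}\bm{H}_{i,sm}$ may have rank as large as $I_s^e$, so the direct nullity argument only delivers ${\rm rank}(\bm{V}_{sm}^{\star})\le I_s^e$ rather than the claimed ${\rm rank}(\bm{V}_{sm}^{\star})\le 1$. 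To recover rank one I would argue that at optimality the positive-definite leading block $\frac{\eta}{M}\bm{I}+\sum_j\delta_{jm}\bm{Z}_j$ annihilates all but one dimension of the subtracted pencil---equivalently, that the active rate constraints shape the null space of $\bm{Y}_{sm}$ into dimension one---or, failing a direct spectral argument, invoke a rank-reduction (purification) procedure for separable SDPs to replace any higher-rank optimizer by a rank-one optimizer of identical objective and feasibility. I expect this multicast step to be the genuinely hard part, likely requiring either an extra structural assumption (e.g.\ slack rate constraints or a dominant-UE condition) or a careful eigenstructure analysis, whereas the unicast URLLC case is essentially automatic. Once both families are shown rank one, each $\bm{V}_{sm}^{\star}$ and $\bm{G}_{i,sm}^{\star}$ is feasible for the non-relaxed (\ref{eq:arg_lagarangian_reformulated}) and attains the relaxed optimum, which upper-bounds the original optimum, so they are optimal for (\ref{eq:arg_lagarangian_reformulated}), completing the proof.
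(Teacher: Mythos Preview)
Your approach is essentially the same as the paper's: form the (partial) Lagrangian of the relaxed SDP, write the stationarity conditions with respect to $\bm V_{sm}$ and $\bm G_{i,sm}$, observe that the PSD dual matrix equals a positive-definite block minus channel terms, and combine the resulting rank bound with complementary slackness. For the unicast URLLC matrices your argument and the paper's coincide and are complete.

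The multicast difficulty you flag is real, and it is worth knowing that the paper does \emph{not} resolve it either. In Appendix~E the stationarity condition for $\bm V_{sm}$ is written with a \emph{single} term $-(1/M+\bar\varphi_{i,sm})\,\bm H_{i,sm}/\sigma_{i,s}^2$, after which the paper concludes ${\rm rank}(\bm\Phi_{sm})\ge JK-1$ from ``full rank minus rank one.'' But exactly as you point out, both the multicast objective $\sum_{i\in\mathcal I_s^e}SNR_{i,sm}^e$ and the per-UE rate constraints contribute one $\bm H_{i,sm}$ each, so the correct gradient carries $\sum_{i\in\mathcal I_s^e}b_{i,sm}\bm H_{i,sm}$, a pencil of rank up to $I_s^e$. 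The paper's Appendix~E simply suppresses this summation and asserts the rank-one conclusion; it offers neither the eigenstructure argument nor the rank-reduction step you propose. So your identification of the multicast case as the genuinely hard part is accurate, and your suggested repairs (purification/rank reduction for separable SDPs, or an additional structural assumption) go beyond what the paper actually provides.
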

\begin{proof}
Please refer to Appendix E.
\end{proof}

The computational complexity of DBO is dominated by that of solving the SDP problem. The SDP problem has $\left(|{\mathcal S}^e|+I^u\right)$ matrices of size of $JK \times JK$ and $\left(3|{\mathcal S}^e|+11I^u\right)$ one-dimensional variables. An interior-point method can then be exploited to efficiently mitigate the SDP problem at the worst-case computational complexity of $O(\left(|{\mathcal S}^e|+I^u\right)J^2K^2 +3|{\mathcal S}^e|+11I^u)^{3.5}$ \cite{ye1997interior}. Nevertheless, the actual complexity will usually be much smaller than the worst case.

The following lemma presents the convergence of the algorithm.
\begin{lemma}\label{lem:SDR}
    Let $(\omega _{sm}^{e\star},{\bm v}_{sm}^{\star},{\bm g}_{i,sm}^{\star})$ denote the optimal solutions, under the ADMM-based distributed algorithm, $\forall k \in \mathbb{Z}^+$, $m \in {\mathcal M}$, we have that ${\bar {\mathcal L}(\omega _{sm}^{e(k)},{\bm v}_{sm}^{(k)},{\bm g}_{i,sm}^{(k)})} $ is bounded and
    \begin{equation}\label{eq:Lagrangian_drift}
         \bar {\mathcal L} (\omega _{sm}^{e\star},{\bm v}_{sm}^{\star},{\bm g}_{i,sm}^{\star}) = \mathop {\lim }\limits_{k \to \infty } \bar {\mathcal L} (\omega _{sm}^{e(k)},{\bm v}_{sm}^{(k)},{\bm g}_{i,sm}^{(k)}).
    \end{equation}
\end{lemma}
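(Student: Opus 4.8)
The plan is to recognize the iterations (\ref{eq:arg_lagarangian})--(\ref{eq:psi_update}) as a standard two-block consensus ADMM applied to a convex program and then to run a Lyapunov (merit-function) argument. First I would establish the convex reformulation: by Lemmas \ref{lem:urllc_bandwidth_transformation} and \ref{lem:lemma_SDP_transformation} each per-sample subproblem (\ref{eq:arg_lagarangian_reformulated}), after dropping the rank constraints, is a convex SDP, and by the tightness result established just before this lemma its optimum is attained at rank-one matrices, so the relaxed optimal value coincides with $\bar{\mathcal L}$ evaluated at the recovered beamformers $\bm v_{sm},\bm g_{i,sm}$. Partially minimizing the jointly convex objective over the private matrices $\bm V_{sm},\bm G_{i,sm}$ on their convex feasible set yields a value function $\Phi_m(\omega_{sm}^e)$ that remains convex in the consensus copies. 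Hence (\ref{eq:sample_approx_problem}) reduces to a convex consensus problem in $\{\omega_{sm}^e\}$ under the coupling $\omega_{sm}^e=\omega_s^e$, for which (\ref{eq:arg_lagarangian})--(\ref{eq:psi_update}) are exactly the scaled-form ADMM updates. Since the feasible region $\Theta$ is nonempty and compact (Proposition \ref{pro:pro_saa_convergence}) and a strictly feasible point exists, strong duality holds and a primal--dual saddle point $(\omega_{sm}^{e\star},\omega_s^{e\star},\psi_{sm}^\star)$ of the Lagrangian exists.

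Next I would extract the first-order optimality conditions of the two minimization steps: the local step (\ref{eq:arg_lagarangian}) certifies, for each $m$, that $\omega_{sm}^{e(k+1)}$ minimizes $\bar{\mathcal L}(\cdot,\omega_s^{e(k)},\psi_{sm}^{(k)})$; the averaging (\ref{eq:omega_update}) is the exact proximal minimizer over the global copies; and (\ref{eq:psi_update}) is dual ascent with step $\mu$. Combining these variational inequalities with the convexity of $\Phi_m$ and the saddle-point inequalities, I would derive the contraction estimate
\begin{equation*}
V^{(k+1)} \le V^{(k)} - \mu \sum_{m,s} \left\| \omega_{sm}^{e(k+1)} - \omega_s^{e(k+1)} \right\|_2^2 - \mu \sum_{s} \left\| \omega_s^{e(k+1)} - \omega_s^{e(k)} \right\|_2^2 ,
\end{equation*}
with the nonnegative Lyapunov function $V^{(k)} = \frac{1}{\mu}\sum_{m,s}\|\psi_{sm}^{(k)} - \psi_{sm}^\star\|_2^2 + \mu\sum_s \|\omega_s^{e(k)} - \omega_s^{e\star}\|_2^2$.

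Then I would harvest the consequences. Because $V^{(k)}$ is nonnegative and monotonically nonincreasing, it converges; telescoping the contraction inequality shows the primal residual $\omega_{sm}^{e(k)}-\omega_s^{e(k)}$ and the successive difference $\omega_s^{e(k)}-\omega_s^{e(k-1)}$ are square-summable and hence vanish as $k\to\infty$. Monotonicity of $V^{(k)}$ also bounds $\{\psi_{sm}^{(k)}\}$ and $\{\omega_s^{e(k)}\}$, and together with the compactness of $\Theta$ this bounds the private beamformer iterates, so by continuity of $\bar{\mathcal L}$ the sequence $\bar{\mathcal L}(\omega_{sm}^{e(k)},\bm v_{sm}^{(k)},\bm g_{i,sm}^{(k)})$ is bounded. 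Finally, feeding the vanishing residuals into the standard two-sided ADMM objective-gap bound, which controls $\bar{\mathcal L}^{(k)}-\bar{\mathcal L}^\star$ from above and below by residual norms weighted by the bounded dual iterates, yields $\lim_{k\to\infty}\bar{\mathcal L}(\omega_{sm}^{e(k)},\bm v_{sm}^{(k)},\bm g_{i,sm}^{(k)}) = \bar{\mathcal L}(\omega_{sm}^{e\star},\bm v_{sm}^{\star},\bm g_{i,sm}^{\star})$, which is the claim.

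The hard part will be the derivation of the contraction inequality for $V^{(k)}$: it requires assembling the variational inequalities from both minimization steps and the dual update so that the cross terms cancel cleanly, and it hinges on the partially-minimized value function $\Phi_m$ being genuinely convex in $\omega_{sm}^e$ rather than merely the joint objective being convex. Verifying this preservation of convexity under partial minimization over the coupled constraints (\ref{eq:arg_lagarangian_reformulated}c)--(\ref{eq:arg_lagarangian_reformulated}e), where $\omega_{sm}^e$ is entangled with the beamformers through the QoS and power constraints, is the delicate step that makes the whole consensus-ADMM machinery applicable.
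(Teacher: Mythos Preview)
Your proposal is correct and follows essentially the same route as the paper: both reduce the iteration to a convex consensus ADMM and invoke the standard Lyapunov/merit-function convergence argument from \cite{boyd2011distributed}. The paper's own proof in Appendix~F is little more than a pointer to that literature (bound the iterates, bound successive differences of $\bar{\mathcal L}$, cite Boyd et al.), whereas you actually spell out the contraction inequality for $V^{(k)}$ and identify the one genuinely nontrivial verification---that partial minimization over the beamformers preserves convexity in $\omega_{sm}^e$ after the SDP reformulation of Lemma~\ref{lem:lemma_SDP_transformation}---which the paper leaves implicit.
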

\begin{proof}
Please refer to Appendix F.
\end{proof}

Further, simulation results show that Algorithm 1 can quickly converge.

\section{Optimization of beamforming with imperfect channel gain}
With the system generated channel coefficient samples, Section V obtains the approximate solution $\{\omega_s^e\}$ to (\ref{eq:original_problem}). In this section, we continue to optimize minislot variables $\{{\bm V}_s(t), {\bm G}_{i,s}(t)\}$ according to sensed imperfect channel gains $\{{\bm H}_{i,s}(t)\}$, $i \in {\mathcal I}^e \cup{{\mathcal I}^u}$, $s \in {\mathcal S}^e \cup{{\mathcal S}^u}$, at the beginning of each minislot $t$.

Given $\{\omega_s^e\}$ and system sensed imperfect channel gains $\{{\bm H}_{i,s}(t)\}$, as the maximization of $U^e(t) + \hat \rho U^u(t)$ at each minislot will lead to the maximization of the time average utility over the whole time slot, the original problem (\ref{eq:original_problem}) can be reduced to the following beamforming optimization problem at each minislot $t$
\begin{subequations}\label{eq:miniSlot_beamforming_problem}
\begin{alignat}{2}
& \mathop {{\rm{maximize}}}\limits_{\{ {\bm V_s}(t),{\bm G_{i,s}}(t)\}}  {{ U}^e(t)}  + {\hat \rho} {{U}^u(t)} \\
& {\rm s.t:} \text{ } \omega _{s}^e{\log _2}( {1 + \frac{{{\rm tr}({{\bm H}_{i,s}(t)}{\bm V_{s}(t)})}}{{\sigma _{i,s}^2}}} ) \ge C_s^{th},\forall s \in {\mathcal S}^e,i \in {\mathcal I}_s^e,  \allowdisplaybreaks[4] \\
& \sum\limits_{s \in {{\mathcal S}^e}} {{\rm tr}({{\bm Z}_j{\bm V}_{s}(t)})}  + \sum\limits_{s \in {{\mathcal S}^u}} {\sum\limits_{i \in {\mathcal I}_s^u} {{\rm tr}({{\bm Z}_j}{{\bm G}_{i,s}(t)})} }  \le {E_j},\forall j\in {\mathcal J}, \\
& {\bm V_s(t)} \succeq 0, \forall s \in {\mathcal S}^e, \\
& {{\bm G}_{i,s}(t)} \succeq 0, \forall i \in {\mathcal I}_{s}^u, s \in {\mathcal S}^u,  \\
& {{\rm rank}({\bm V_{s}(t)}) \le 1}, \forall s \in {\mathcal S}^e,  \\
& {{\rm rank}({{\bm G}_{i,s}(t)}) \le 1}, \forall i \in {\mathcal I}_{s}^u, s \in {\mathcal S}^u, \\
& \rm {constraint \text{ }  (\ref{eq:total_bandwidth}) \text{ } is \text{ } satisfied.}
\end{alignat}
\end{subequations}

By leveraging the presented SDR scheme and variable slack scheme in Section V, (\ref{eq:miniSlot_beamforming_problem}) can be equivalently transformed into a standard SDP problem that is able to be effectively mitigated by CVX or MOSEK.

Recall that the SDR for both $\bm V_{s}(t)$ and $\bm G_{i,s}(t)$ is tight, we therefore can perform the eigenvalue decomposition on $\bm V_{s}(t)$ and $\bm G_{i,s}(t)$ to obtain the optimal beamforming vectors $\bm v_{s}(t)$ and $\bm g_{i,s}(t)$, respectively.

Then, the bandwidth and beamforming optimization algorithm designed for the RAN slicing system can be summarized as follows.
\begin{algorithm}
\caption{Bandwidth and beamforming optimization algorithm based on ADMM, B$^2$O-ADMM }
\label{alg_algo_bandwidth_beamforming}
\begin{algorithmic}[1]
\STATE \textbf{Input:} $\{{\bm H}_{i,s}(t)\}$, for all $i \in {\mathcal I}^e \cup{{\mathcal I}^u}$, $s \in {\mathcal S}^e \cup{{\mathcal S}^u}$
\STATE \textbf{Output:} $\{\omega_s^e\}$, $\{\bm v_{s}(t)\}$, and $\{\bm g_{i,s}(t)\}$
\STATE Call Algorithm \ref{alg1} to generate $\{\omega_s^e\}$, for all $s\in {\mathcal S}^e$.
\FOR{$t = 1 : T$}
\STATE Given $\{\omega_s^e\}$, the RAN-C mitigates (\ref{eq:miniSlot_beamforming_problem}) to obtain beamformers $\{\bm v_{s}(t)\}$ for all $s\in {\mathcal S}^e$ and $\{{\bm g}_{i,s}(t)\}$ for all $i \in {\mathcal I}_s^u$, $s \in {\mathcal S}^u$.
\ENDFOR
\end{algorithmic}
\end{algorithm}

\section{Simulation results}
\subsection{Comparison algorithms and simulation setup}
We {implement} the following algorithms {on Python and compare them} to evaluate the performance of bandwidth allocation and beamforming algorithms in the RAN slicing system: i) the proposed B$^2$O-ADMM algorithm; ii) the IRHS algorithm in \cite{tang2019service}, which enforces all slices requests and optimizes the same objective function as B$^2$O-ADMM; iii) the proposed resource allocation algorithm without ADMM, NoADMM. Specifically, NoADMM algorithm generates the bandwidth allocated to eMBB slices based on imperfect channel gains sensed at the beginning of the $1^{\rm {st}}$ minislot.

In the simulation, {we consider the CoMP-enabled RAN slicing system with three RRHs located on a circle with a radius of 0.5 km. The distance between every two RRHs is equal.}
eMBB and URLLC UEs are randomly and uniformly distributed in the circle. The transmit antenna gain at each RRH is set to be $5$dB, and a log-normal shadowing path loss model is utilized to simulate the path loss between a RRH and a UE. Particular, a downlink path loss is calculated by $H({\rm dB}) = 128.1 + 37.6\log_{10}d$, where $d$ (in km) represents the distance between a UE and a RRH. The log-normal shadowing standard deviation is set to be 10 dB. Besides, we let the maximum transmit power $E_1 = E_2 =E_3 = 1$ W, $\sigma_{i,s}^2 = -110$ dBm for all $i \in {\mathcal I}^e\cup{{\mathcal I}^u}$, $s \in {\mathcal S}^e\cup{{\mathcal S}^u}$, $L_{i,s}^u = 160$ bits, $\lambda_{s} = \lambda = 0.1$ packet per unit time for all $i \in {\mathcal I}_s^u$, $s\in {\mathcal S}^u$, $K=2$, $\eta=1000$, $\hat \rho=500$, $M = 100$, $T=60$ minislots, $\kappa=5.12 \times 10^{-4}$ channel uses per unit time per unit bandwidth, $W=4$ MHz, $\phi=1.5$. Other slice configuration parameters are listed as below: $\varsigma=2\times10^{-5}$, $\alpha = 10^{-5}$, $\beta = 2 \times 10^{-8}$, $|{\mathcal S}^e|=3$, $|{\mathcal S}^u| =2$, $\{I_s^e\} = \{4, 6, 8\}$ UEs, $C_s^{th} = \{6, 4, 2\}$ Mb/s, $\{I_s^u\} = \{3, 5\}$ UEs, and $\{D_s\} = \{1, 2\}$ milliseconds \cite{tang2019service}.

\subsection{Performance evaluation}
The following performance indicators are adopted to evaluate the comparison algorithms: i) system bandwidth $W_u$ (in MHz) allocated to URLLC slices; ii) total transmit power $E^u = \sum\nolimits_{t=1}^{\rm T} \sum\nolimits_{s \in {{\mathcal S}^u}} {\sum\nolimits_{i \in {\mathcal I}_s^u} {{\rm tr}({{\bm G}_{i,s}(t)})} } $ (in W) configured for URLLC slices; iii) long-term total slice utility $\bar U$ that is the objective function of (\ref{eq:original_problem}).

We first evaluate the convergence of the proposed B$^2$O-ADMM algorithm. As shown in Algorithm \ref{alg_algo_bandwidth_beamforming}, the convergence of B$^2$O-ADMM is solely determined by that of the ADMM-based DBO algorithm. We thus evaluate the convergence of DBO instead. Denote the loss of accuracy of the global consensus variable by $\Delta_{\omega} = {\rm{ }}\sum\nolimits_{s \in {{\mathcal S}^e}} {\left| {\omega _s^{e(k + 1)} - \omega _s^{e(k)}} \right|} $. According to the principle of ADMM, if the loss value $\Delta_{\omega}$ approaches zero after a limited number of iterations, then the ADMM-based DBO algorithm is convergent; otherwise, the algorithm is divergent. Fig. \ref{fig:fig_convergence} depicts the convergence curve of the proposed B$^2$O-ADMM algorithm. It shows that B$^2$O-ADMM can converge after several iterations.
\begin{figure}[!t]
\centering
\includegraphics[width=2.6in]{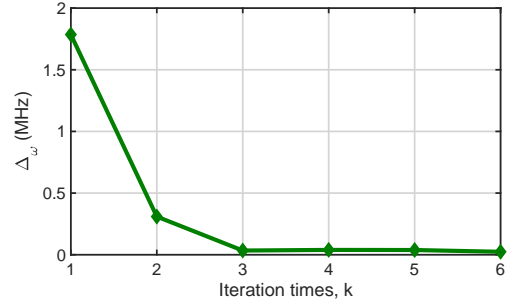}
\caption{A convergence curve of the B$^2$O-ADMM algorithm.}
\label{fig:fig_convergence}
\end{figure}

To understand the impact of packet arrival rate on the performance of comparison algorithms, we plot the relationship between the bandwidth allocated to URLLC slices and the packet arrival rate $\lambda$ with $\lambda = \{0.1,0.2,\ldots,1.0,1.1\}$ packets per unit time in Fig. \ref{fig:fig_bandwidth_vs_lambda}. Besides, the trends of the transmit power configured for URLLC slices and the achieved total slice utility over both $\lambda$ and energy efficiency coefficient $\hat \rho$ are illustrated in Fig. \ref{fig:fig_power_n_utility_vs_lambda}.

From these two figures, we have the following observations:
i) for B$^2$O-ADMM and NoADMM algorithms, the system bandwidth allocated to URLLC slices monotonously increases with an increasing arrival rate when the RAN slicing system provides URLLC and eMBB multiplexing services. When the system terminates eMBB services, B$^2$O-ADMM and NoADMM algorithms recommend to allocate the total network bandwidth to URLLC slices to guarantee more reliable URLLC transmission;
ii) for the IRHS algorithm, as it does not design some strategies to reduce the packet blocking probability of URLLC packets, it suggests keeping the amount of system bandwidth allocated to URLLC slices at a low constant value. Meanwhile, the IRHS algorithm will not allocate the total network bandwidth to URLLC slices even though in the case of end of eMBB services;
iii) compared with B$^2$O-ADMM and NoADMM, although IRHS needs less network bandwidth to ensure an ultra-low URLLC decoding error probability, it desires greater transmit power $E^u$ to satisfy QoS requirements of URLLC UEs and then consumes more energy;
iv) the B$^2$O-ADMM algorithm obtains the greatest long-term total slice utility. Besides, it can be utilized to configure a slicing system, which can support URLLC transmission with higher arrival rates without significantly decreasing the achieved total slice utility $\bar U$.
\begin{figure}[!t]
\centering
\includegraphics[width=2.9in]{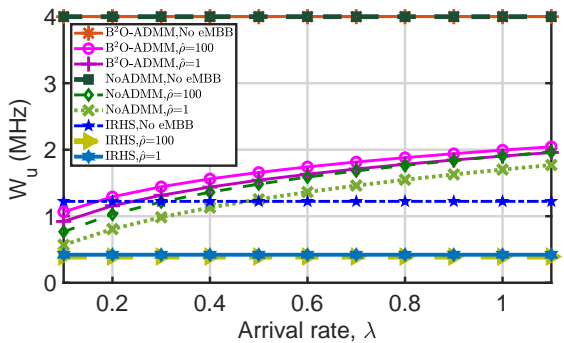}
\caption{Trend of the system bandwidth allocated to URLLC slices under different arrival rates, $\lambda$.}
\label{fig:fig_bandwidth_vs_lambda}
\end{figure}

\begin{figure}[!t]
\centering
\subfigure[Total transmit power configured for URLLC slices]{\includegraphics[width=2.9in]{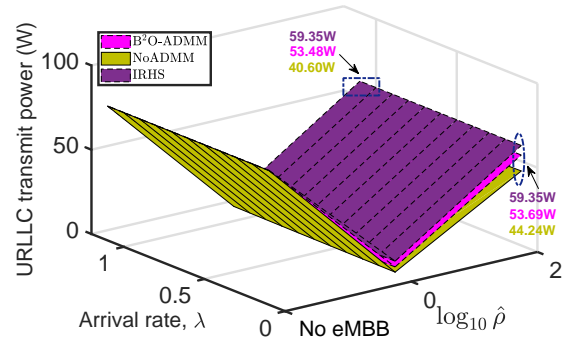}%
\label{fig:fig_total_utility_vs_lamdba_IoT}}
\hspace{0.05\linewidth}
\subfigure[Achieved long-term total slice utility]{\includegraphics[width=2.9in]{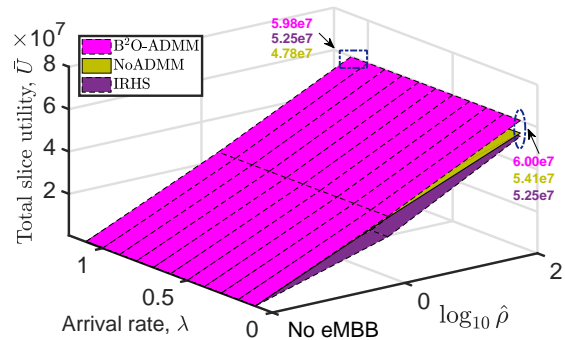}%
\label{fig_vs_Ds}}
\caption{Trends of transmit power configured for URLLC slices and achieved long-term total slice utility vs. $\hat \rho$ and $\lambda$.}
\label{fig:fig_power_n_utility_vs_lambda}
\end{figure}

We next evaluate the impact of the slice priority coefficient $\hat \rho$ on the performance of the comparison algorithms with $\hat \rho \in \{1, 50, 100, \ldots, 500\}$. In Fig. \ref{fig:fig_total_utility_vs_rho}, we plot the trend of achieved total slice utility $\bar U$ over different slice priority coefficients $\hat \rho$ with $\lambda = 0.1$ and $\eta = 1000$. Besides, Fig. \ref{fig:fig_bandwidth_power_vs_rho} shows the system bandwidth $W_u$ allocated to URLLC slices and configured URLLC transmit power $E^u$ under different $\hat \rho$.
\begin{figure}[!t]
\centering
\includegraphics[width=2.9in]{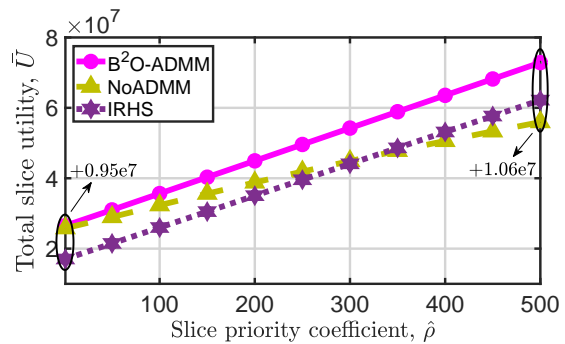}
\caption{Trend of achieved long-term total slice utility vs. $\hat \rho$.}
\label{fig:fig_total_utility_vs_rho}
\end{figure}
\begin{figure}[!t]
\centering
\includegraphics[width=2.9in]{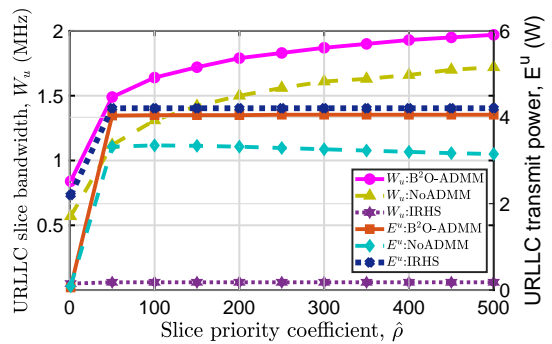}
\caption{The configured $W^u$ and $E^u$ of all comparison algorithms under different $\hat \rho$}
\label{fig:fig_bandwidth_power_vs_rho}
\end{figure}

We can obtain the following observations from these figures:
i) the proposed B$^2$O-ADMM can obtain the greatest $\bar U$. The achieved $\bar U$ monotonously increases with $\hat \rho$ for all comparison algorithms;
ii) the total slice utility gain of B$^2$O-ADMM over IRHS does not significantly change with $\hat \rho$. The utility gain of B$^2$O-ADMM over NoADMM increases with an increasing $\hat \rho$. Compared with NoADMM, a great $\hat \rho$ results in large bandwidth $W_u$ and then a great URLLC slice utility {can be obtained by B$^2$O-ADMM};
iii) for the B$^2$O-ADMM and NoADMM algorithms, they suggest configuring great system bandwidth for URLLC slices when URLLC slices have a great slice priority coefficient. Compared with other algorithms, B$^2$O-ADMM requires the largest system bandwidth to accommodate the QoS requirements of URLLC UEs. For IRHS, the obtained $W_u$ is robust to $\hat \rho$;
iv) great transmit power should be configured if the URLLC slice priority coefficient is great. {However, when $\hat \rho > 50$, the obtained $E^u$ of B$^2$O-ADMM and IRHS are robust to $\hat \rho$. For NoADMM, its obtained $E^u$ slightly decreases with an increasing $\hat \rho$ as the corresponding system bandwidth allocated to URLLC slices is widened.}

At last, to understand the effect of energy efficiency coefficient $\eta$, we plot the relationship between $\bar U$ and $\eta$ in Fig. \ref{fig:fig_total_utility_vs_eta} and plot trends of $W^u$ and $E^u$ under different $\eta$ in Fig. \ref{fig:fig_urllc_bandwidth_n_transmit_power} with $\lambda = 0.1$ and $\hat \rho = 500$.
\begin{figure}[!t]
\centering
\includegraphics[width=2.9in]{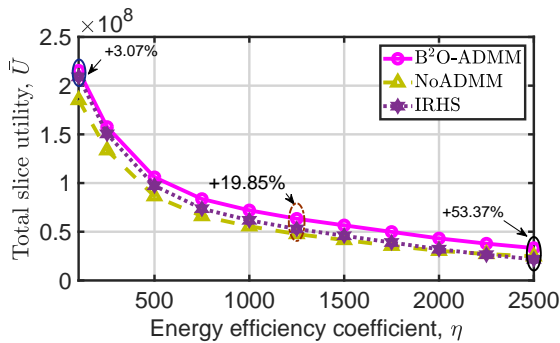}
\caption{Trend of the long-term total slice utility vs. $\eta$.}
\label{fig:fig_total_utility_vs_eta}
\end{figure}
\begin{figure}[!t]
\centering
\includegraphics[width=2.9in]{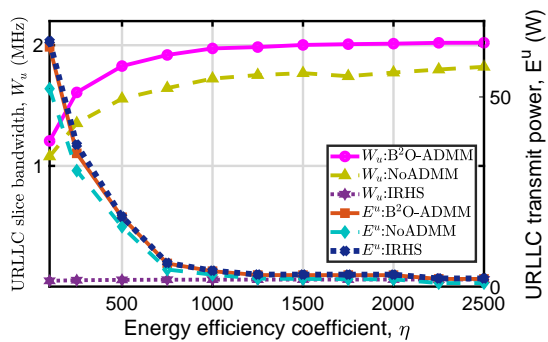}
\caption{Trends of URLLC slice bandwidth $W_u$ and transmit power $E^u$ configured for URLLC slices vs. $\eta$.}
\label{fig:fig_urllc_bandwidth_n_transmit_power}
\end{figure}

We can observe the following conclusions from Figs. \ref{fig:fig_total_utility_vs_eta} and \ref{fig:fig_urllc_bandwidth_n_transmit_power}:
i) the proposed B$^2$O-ADMM algorithm achieves the greatest $\bar U$, and the obtained total slice utilities of all comparison algorithms monotonously decrease as $\eta$ increases. A great $\eta$ indicates that the power consumption dominates the total slice utility; thus, all comparison algorithms reduce the power consumption and the corresponding SNR received at each end UE;
ii) for B$^2$O-ADMM and NoADMM, their obtained $W_u$ increase with an increasing $\eta$. A large $\eta$ leads to the system configuration of small transmit power; thus, the system bandwidth allocated to URLLC slices should be widened to satisfy QoS requirements of URLLC UEs. As QoS requirements of URLLC UEs are easy to be satisfied for the IRHS algorithm, a decreasing $E^u$ does not result in a significantly increasing $W_u$;
iii) compared with NoADMM, B$^2$O-ADMM suggests apportioning greater network bandwidth to URLLC slices as B$^2$O-ADMM achieves the global consensus bandwidth to ensure an ultra-low packet blocking probability over the whole time slot.

\section{Conclusion}
In this paper, we considered a CoMP-enabled RAN slicing system simultaneously supporting URLLC and eMBB traffic transmission. In the presence of eMBB traffic, we orchestrated the shared network resources of the system to guarantee a more reliable bursty URLLC service provision from the perspectives of lowering both URLLC packet blocking probability and codeword decoding error probability. We formulated the problem of RAN slicing for bursty URLLC and eMBB service multiplexing as a resource optimization problem and developed a joint bandwidth and CoMP beamforming optimization algorithm to maximize the long-term total slice utility.

There are some interesting directions to explore in the future, such as validating the performance of the proposed algorithm through experiments and conducting the concrete prototype implementation of the RAN slicing system.



\appendix

\subsection{Proof of \textbf{Lemma} 1}
Given an $M/M/W^u$ queueing system with a URLLC packet arrival rate $\bm \lambda$ and a packet transmit speed of $\{\kappa/r_{i,s}^u(t)\}$. The QoS goal of configuring URLLC slices is that the queueing probability $P_Q^{M/M/W^u}$ in the $M/M/W^u$ system is lower than a given value $\varsigma$ and the packet blocking probability $p_b({\bm \omega}^u, {\bm \lambda}, {\bm D}, W^u({\bm r}(t)))$ (or $p_b$ for notation lightening) is of order $\alpha$. To achieve this goal, we exploit the square-root staffing rule \cite{harchol2013performance} to derive the needed network bandwidth.

If we let $A({\bm r}(t)) = \sum\limits_{s \in {\mathcal S}^u } \sum\limits_{i \in {\mathcal I}_s^u} {\lambda _{s}}{\frac{{{{r_{i,s}^u(t)}}}}{\kappa}} $, $B({\bm r}(t)) = \sum\limits_{s \in {\mathcal S}^u} \sum\limits_{i \in {\mathcal I}_s^u}{\lambda _{s}} \times$ ${\frac{{{r_{i,s}^u(t)}^2}}{{{\kappa ^2}{D_s}}}}$, then the minimum network bandwidth needed to satisfy the QoS goal can be approximately expressed as \cite{harchol2013performance}
\begin{equation}\label{eq:approx_bandwidth_appendix}
    W^u({\bm r}(t))  \approx  A({\bm r}(t)) + c(\varsigma, \alpha)\sqrt{B({\bm r}(t))}
\end{equation}

For the $M/M/W^u$ queueing system, the expression of $P_Q^{M/M/W^u}$ w.r.t $p_b$ can be written as \cite{harchol2013performance}
\begin{equation}\label{eq:queueing_block_probability}
    P_Q^{M/M/W^u} = \frac{\left (A({\bm r}(t))+c(\varsigma, \alpha)\sqrt{B({\bm r}(t))}\right )p_b}{c(\varsigma, \alpha)\sqrt{B({\bm r}(t))} + A({\bm r}(t))p_b}
\end{equation}

Since $P_Q^{M/M/W^u} \le \varsigma$ and $\varsigma > \alpha$, (\ref{eq:queueing_block_probability}) can take the following form
\begin{equation}\label{eq:derive_PQ_1}
    c(\varsigma, \alpha) \ge \frac{{A(r(t))}}{{\sqrt {B(r(t))} }}\frac{{{\alpha} - \varsigma {\alpha}}}{{\varsigma  - {\alpha}}}
\end{equation}

According to Cauchy-chwarz inequality, we can scale up $A({\bm r}(t))$ as
\begin{equation}\label{eq:derive_PQ_2}
    A({\bm r}(t)) \le \sqrt {\sum\limits_{s \in {{\mathcal S}^u}} {\sum\limits_{i \in {\mathcal I}_s^u} {\lambda _s^2D_s^2} } } \sqrt {\sum\limits_{s \in {{\mathcal S}^u}} {\sum\limits_{i \in {\mathcal I}_s^u} {\frac{{r_{i,s}^u{{(t)}^2}}}{{{\kappa ^2}D_s^2}}} } }
\end{equation}

For $B({\bm r}(t))$, it can be scaled down as
\begin{equation}\label{eq:derive_PQ_3}
    B(r(t)) \ge \mathop {\min }\limits_{s \in {{\mathcal S}^u}} \{ {\lambda _s}{D_s}\} \sum\limits_{s \in {{\mathcal S}^u}} {\sum\limits_{i \in {\mathcal I}_s^u} {\frac{{r_{i,s}^u{{(t)}^2}}}{{{\kappa ^2}{D_s^2}}}} }
\end{equation}

It can be observed that $W^u({\bm r}(t))$ monotonously increases with $c(\varsigma, \alpha)$. A great $c(\varsigma, \alpha)$ indicates that more bandwidth should be cut and apportioned to URLLC slices, and $W^u({\bm r}(t))$ will obtain the minimum value when (\ref{eq:derive_PQ_1}) is active. Therefore, on the basis of (\ref{eq:derive_PQ_1})-(\ref{eq:derive_PQ_3}), to save network bandwidth while ensuring a low packet blocking probability we have
\begin{equation}\label{eq:derive_PQ_4}
    c(\varsigma, \alpha) = \frac{{{\alpha} - \varsigma {\alpha}}}{{\varsigma  - {\alpha}}}\sqrt {\frac{{\sum\limits_{s \in {{\mathcal S}^u}} {I_s^u\lambda _s^2D_s^2} }}{{\mathop {\min }\limits_{s \in {{\mathcal S}^u}} \{ {\lambda _s}{D_s}\} }}}
\end{equation}

This completes the proof.

\subsection{Proof of \textbf{Lemma} 2}
By calculating the first order derivation of $r_{i,s}^{u}(t)$ over $V(SNR_{i,s}^u(t))$, we observe that $r_{i,s}^{u}(t)$ monotonously increases with $V(SNR_{i,s}^u(t))$. Since $\ln^2 2$ is the maximum value of $V(SNR_{i,s}^u(t))$, if we let $V(SNR_{i,s}^u$ $(t))=\ln^2 2$, then we can obtain the minimum upper bound of $r_{i,s}^{u}(t)$.

Let $\sqrt{r_{i,s}^u(t)} = x$ and $V(SNR_{i,s}^u(t))=\ln^2 2$, then (\ref{eq:URLLC_bit_length}) becomes a quadratic equation w.r.t $x$. Solving it we can achieve the closed-form expression for the minimum upper bound of $r_{i,s}^u(t)$ that is shown in (\ref{eq:URLLC_channel_use}).
This completes the proof.

\subsection{Proof of \textbf{Lemma} 3}
On the one hand, if the constraint (\ref{eq:slack_equation2}) is active, then (\ref{eq:slack_equation1}) and (\ref{eq:slack_equation2}) are equivalent to (\ref{eq:sample_approx_problem}d); on the other hand, if at the optimal solution to (\ref{eq:arg_lagarangian_reformulated}) constrained by (\ref{eq:slack_equation1}) and (\ref{eq:slack_equation2}), there is a sample $m$ (or UE $i \in {\mathcal I}_s^u$, $s \in {\mathcal S}^u$) such that (\ref{eq:slack_equation2}) is non-active, then we can always pull the value of $f_{i,sm}^u$ towards $r_{i,sm}^u$ without violating (\ref{eq:slack_equation1}) and changing the value of the objective function. The constraints (\ref{eq:slack_equation1}) and (\ref{eq:slack_equation2}) are therefore equivalent to (\ref{eq:sample_approx_problem}d). This completes the proof.

\subsection{Proof of \textbf{Lemma} 4}
In this subsection, we exploit the variable slack scheme to transform non-linear constraints into convex cones.

For the constraint (\ref{eq:arg_lagarangian_reformulated}b), we introduce a variable ${{\bar \lambda }_{i,sm}}$ and let
\begin{equation}\label{eq:ln_rate}
    \ln \left( {1 + \frac{{{\rm tr}({\bm H_{i,sm}}{{\bm V}_{sm}})}}{{\sigma _{i,s}^2}}} \right) \ge {\bar \lambda _{i,sm}}\ln 2,\forall s \in {{\mathcal S}^e},i \in {\mathcal I}_s^e
\end{equation}

By introducing the variable $\{{{\theta }_{i,sm}}\}$, we can obtain
\begin{equation}\label{eq:linear_snr_theta}
    {{{\rm tr}({\bm H_{i,sm}}{{\bm V}_{sm}})}}/{{\sigma _{i,s}^2}} \ge {\theta _{i,sm}},\forall s \in {{\mathcal S}^e},i \in {\mathcal I}_s^e
\end{equation}

(\ref{eq:ln_rate}) can then be rewritten as a standard convex expression, i.e.,
\begin{equation}\label{eq:standard_theta_ism}
    (1 + {\theta _{i,sm}},1,{\bar \lambda _{i,sm}}\ln 2) \in {{\mathcal K}_{\exp }},\forall s \in {{\mathcal S}^e},i \in {\mathcal I}_s^e
\end{equation}
where ${{\mathcal K}_{\exp }} = \{ ({x_1},{x_2},{x_3}):{x_1} \ge {x_2}{e^{{x_3}/{x_2}}},{x_2} > 0\} $ is an exponential cone of ${\mathbb R}^3$.

$\omega_{sm}^e$ can then be correlated with ${\bar \lambda _{i,sm}}$ by the following quadratic cone
\begin{equation}\label{eq:standard_omega}
    (\omega _{sm}^e,{\bar \lambda _{i,sm}},\sqrt {2C_s^{th}} ) \in {\mathcal Q}_r^3,\forall s \in {{\mathcal S}^e},i \in {\mathcal I}_s^e
\end{equation}
where ${\mathcal Q}_r^n = \{{\bm x}|2{x_1}{x_2} \ge x_3^2 +  \ldots  + x_n^2,{x_1},{x_2} \ge 0\} $ is a rotated quadratic cone of ${\mathbb R}^n$.

For ${\bm x} \in {\mathbb R}^n$, as $t \ge {\left\| {\bm x} \right\|_2} \Leftrightarrow (t,{\bm x}) \in {{\mathcal Q}^{n + 1}}$, the constraint (\ref{eq:slack_equation1}) is equivalent to the following expression
\begin{equation}\label{eq:standard_f_mean}
    ( {c^{-1}(\varsigma,\alpha )(W - \sum\limits_{s \in {{\mathcal S}^e}} {\omega _{sm}^e}  - A({\bm f_m})),\{ {\frac{\sqrt{\lambda_s}{f_{i,sm}^u}}{{\kappa \sqrt{D_s}}}} \}} ) \in {{\mathcal Q}^{\sum\limits_{s\in {\mathcal S}^u} {I_s^u}  + 1}}
\end{equation}
where ${{\mathcal Q}^n} = \{x| {x_1} \ge \sqrt {x_2^2 +  \ldots  + x_n^2} \} $ is a quadratic cone of ${\mathbb R}^n$.

Next, we let ${e^{d_{i,sm}^u}} \ge \frac{1}{{C(SNR_{i,sm}^u)}},\forall i \in {\mathcal I}_s^u,s \in {{\mathcal S}^u}$ and introduce the variable $\{\tau _{i,sm}^{u}\}$. In this way, we can obtain
\begin{equation}\label{eq:linear_snr_tau_ism}
    {\rm{tr}}({\bm H_{i,sm}}{{\bm G}_{i,sm}})/\phi \sigma _{i,s}^2 \ge \tau _{i,sm}^u,\forall i \in {\mathcal I}_s^u,s \in {{\mathcal S}^u}
\end{equation}
and
\begin{equation}\label{eq:standard_phi_d}
    \left\{ {\begin{array}{*{20}{c}}
{(\varphi _{i,sm}^u,1, - d_{i,sm}^u + \ln \ln 2) \in {{\mathcal K}_{\exp }}}\\
{(1 + \tau _{i,sm}^u,1,\varphi _{i,sm}^u) \in {{\mathcal K}_{\exp }}}
\end{array}} \right.,\forall i \in {\mathcal I}_s^u,s \in {{\mathcal S}^u}
\end{equation}
where (\ref{eq:standard_phi_d}) stems from the fact that $\ln \ln x \ge t,x > 1 \Leftrightarrow \{ (u,1,t) \in {{\mathcal K}_{\exp }},(x,1,u) \in {{\mathcal K}_{\exp }}\} $.

For constraint (\ref{eq:slack_equation2}), we introduce the variable $\{\nu_{i,sm}^{u}\}$ with ${e^{\nu_{i,sm}^u}} \ge \sqrt {1 + \frac{{4L_{i,s}^u{e^{ - d_{i,sm}^u}}}}{Y}} $, where $Y = {({Q^{ - 1}}(\beta){\ln ^{ - 1}}2)^2}$. Then we have
\begin{equation}\label{eq:w_ism}
    2\nu_{i,sm}^u \ge \ln ({e^0} + {e^{ - d_{i,sm}^u + \ln (4L_{i,s}^u/Y)}})
\end{equation}

Since $t \ge \ln (\sum\nolimits_i {{e^{{x_i}}}} ) \Leftrightarrow \{ \sum\nolimits_i {{\mu _i}}  \le 1;({\mu _i},1,{x_i} - t) \in {{\mathcal K}_{\exp }},\forall i\} $, (\ref{eq:w_ism}) can be rewritten as
\begin{equation}\label{eq:standard_mu_nu}
    \left\{ {\begin{array}{*{20}{c}}
{\left( {\mu _{i,sm}^{u;1},1, - 2\nu_{i,sm}^u} \right) \in {{\mathcal K}_{\exp }}}\\
{\left( {\mu _{i,sm}^{u;2},1, - d_{i,sm}^u - 2\nu_{i,sm}^u + \ln (4L_{i,s}^u/Y)} \right) \in {{\mathcal K}_{\exp }}}
\end{array}} \right.
\end{equation}
\begin{equation}\label{eq:standard_linear_mu}
    {\mu _{i,sm}^{u;1} + \mu _{i,sm}^{u;2} \le 1}
\end{equation}

Besides, by letting $f_{i,sm}^u \ge {e^{x_{i,sm}^u}}$, (\ref{eq:slack_equation2}) can be transformed into the following inequality
\begin{equation}\label{eq:x_ism}
    \begin{array}{l}
x_{i,sm}^u \ge \ln \left( {{e^{d_{i,sm}^u + \ln L_{i,s}^u}} + {e^{2d_{i,sm}^u + \ln (Y/2)}} + } \right.\\
\qquad \quad \left. {{e^{2d_{i,sm}^u + \nu_{i,sm}^u + \ln (Y/2)}}} \right),\forall i \in {\mathcal I}_s^u,s \in {{\mathcal S}^u}
\end{array}
\end{equation}

Likewise, (\ref{eq:x_ism}) can take the following forms
\begin{equation}\label{eq:linear_zeta_ism}
    {\zeta _{i,sm}^{u;1} + \zeta _{i,sm}^{u;2} + \zeta _{i,sm}^{u;3} \le 1}
\end{equation}
\begin{equation}\label{eq:standard_zeta_ism}
    \left\{ {\begin{array}{*{20}{c}}
{\left( {\zeta _{i,sm}^{u;1},1,d_{i,sm}^u - x_{i,sm}^u + \ln L_{i,s}^u} \right) \in {{\mathcal K}_{\exp }}}\\
{\left( {\zeta _{i,sm}^{u;2},1,2d_{i,sm}^u - x_{i,sm}^u + \ln (Y/2)} \right) \in {{\mathcal K}_{\exp }}}\\
{\left( {\zeta _{i,sm}^{u;3},1,2d_{i,sm}^u + \nu_{i,sm}^u - x_{i,sm}^u + \ln (Y/2)} \right) \in {{\mathcal K}_{\exp }}}
\end{array}} \right.
\end{equation}

At last, for the inequality $f_{i,sm}^u \ge {e^{x_{i,sm}^u}}$, it can take the following exponential cone expression
\begin{equation}\label{eq:standard_f_ism}
    \left( {f_{i,sm}^u,1,x_{i,sm}^u} \right) \in {{\mathcal K}_{\exp }},\forall i \in {\mathcal I}_s^u,s \in {{\mathcal S}^u}
\end{equation}

Next, we should prove the equivalence of transforming the above non-linear constraints. As a similar proof for the equivalent transformation of constraints via the variable slack scheme can be found in subsection V-D, we omit the proof here for brevity.

At this point, we may say that the problem (\ref{eq:arg_lagarangian_reformulated}) without low-rank constraints can be equivalently transformed into the problem (\ref{eq:standard_transformed_bandwidth_n_beamforming}). Further, it can observe that (\ref{eq:standard_transformed_bandwidth_n_beamforming}) consists of a quadratic objective function, a set of affine constraints, (rotated) quadratic cone constraints and convex cone constraints of positive semidefinite matrices. Next, by referring to the fact that semidefinite optimization is a generalization of conic optimization, which allows for the utilization of matrix variables belonging to the convex cone of positive semidefinite matrices, we can conclude that the problem (\ref{eq:standard_transformed_bandwidth_n_beamforming}) is an SDP problem. This completes the proof.

\subsection{Proof of \textbf{Lemma} 5}
Although some standard optimization tools were leveraged to mitigate (\ref{eq:standard_transformed_bandwidth_n_beamforming}), they could not capture structural features (e.g., the rank) of the optimal solution. We resort to the Lagrange dual method to prove the tightness of SDR for power matrices.

The Lagrange dual problem of (\ref{eq:standard_transformed_bandwidth_n_beamforming}) can be formulated as
\begin{subequations}\label{eq:lagrange_dual}
\begin{alignat}{2}
& \mathop {\max }\limits_{\left\{ {\scriptstyle{{\bar \varphi }_{i,sm}},{{\bar \chi }_{i,sm}},\hfill\atop
\scriptstyle{{\bar \mu }_{jm}},{{\bm {\Phi}} _{sm}},{{\bm X}_{i,sm}}\hfill} \right\}} \mathop {\min }\limits_{\left\{ {\scriptstyle\omega _{sm}^e,{\bm V_{sm}},\hfill\atop
\scriptstyle{\bm G_{i,sm}},\ldots,{\tau_{i,sm}^u}\hfill} \right\} \in {\mathcal F}_m} L( \ldots )\\
& {\rm subject \text{ } to:} \nonumber \\
& {{\bar \varphi }_{i,sm}} \ge 0,{{\bm {\Phi}} _{sm}} \succeq 0,\forall i \in {\mathcal I}_s^e,s \in {{\mathcal S}^e}\\
& {{\bar \chi }_{i,sm}} \ge 0,{{\bm X}_{i,sm}} \succeq 0,\forall i \in {\mathcal I}_s^u,s \in {{\mathcal S}^u}\\
& {{\bar \mu }_{jm}} \ge 0, j \in {\mathcal J}
\end{alignat}
\end{subequations}
where ${\mathcal F}_m$ is the feasible region configured by constraints (\ref{eq:standard_transformed_bandwidth_n_beamforming}c)-(\ref{eq:standard_transformed_bandwidth_n_beamforming}e), (\ref{eq:standard_linear_mu}), and (\ref{eq:linear_zeta_ism}), and the partial Lagrangian function
\begin{equation}\label{eq:lagrangian_func}
    \begin{array}{l}
L( \ldots ) =  - \left( {\frac{1}{M} + {{\bar \varphi }_{i,sm}}} \right)\sum\limits_{s \in {{\mathcal S}^e}} {\sum\limits_{i \in {\mathcal I}_s^e} {\frac{{{\rm tr}({\bm H_{i,sm}}{{\bm V}_{sm}})}}{{\sigma _{i,s}^2}}} }  + \\
\sum\limits_{s \in {{\mathcal S}^e}} {\left[ {\frac{\eta }{M}{\rm tr}({{\bm V}_{sm}}) + \sum\limits_{j \in {\mathcal J}} {{{\bar \mu }_{jm}}{\rm{tr}}({\bm Z_j}{{\bm V}_{sm}})}  - {\rm tr}({\bm \Phi _{sm}^{\rm T}}{{\bm V}_{sm}})} \right]}  - \\
\sum\limits_{s \in {{\mathcal S}^u}} {\sum\limits_{i \in {\mathcal I}_s^u} {\left[ {\left( {\frac{{\hat \rho }}{M} + {{\bar \chi }_{i,sm}}} \right)\frac{{{\rm tr}({\bm H_{i,sm}}{{\bm G}_{i,sm}})}}{{\phi \sigma _{i,s}^2}} + {\rm tr}({{\bm X}_{i,sm}^{\rm T}}{{\bm G}_{i,sm}})} \right]} }  + \\
\sum\limits_{s \in {{\mathcal S}^u}} {\sum\limits_{i \in {\mathcal I}_s^u} {\left[ {\frac{{\hat \rho \eta }}{M}{\rm tr}({{\bm G}_{i,sm}}) + \sum\limits_{j \in {\mathcal J}} {{{\bar \mu }_{jm}}{\rm{tr}}({\bm Z_j}{{\bm G}_{i,sm}})} } \right]} }
\end{array}
\end{equation}

It is noteworthy that only terms related to power matrices ${\bm V}_{sm}$ and ${\bm G}_{i,sm}$ are involved in (\ref{eq:lagrangian_func}) for brevity as we aim at calculating their ranks via Karush-Kuhn-Tucker (KKT) conditions \cite{Stephen2004Convex}.

By applying KKT conditions, the necessary conditions for achieving the optimal ${\bm V}_{sm}^{\star}$ and ${\bm G}_{i,sm}^{\star}$ can be arranged as
\begin{equation}\label{eq:KKT_condition_1}
    \begin{array}{l}
\frac{{\partial L( \ldots )}}{{\partial {\bm V}_{sm}^{\star}}} =  - \left( {\frac{1}{M} + {{\bar \varphi }_{i,sm}}} \right)\frac{{{\bm H_{i,sm}}}}{{\sigma _{i,s}^2}} + \frac{\eta }{M}{\bm E_{sm}} + \\
\qquad \qquad \sum\limits_{j \in {\mathcal J}} {{{\bar \mu }_{jm}}{\bm Z_j}}  - {{\bm {\Phi}} _{sm}} = \bm 0
\end{array}
\end{equation}
\begin{equation}\label{eq:KKT_condition_2}
    \begin{array}{l}
\frac{{\partial L( \ldots )}}{{\partial {\bm G}_{i,sm}^{\star}}} =  - \left( {\frac{{\hat \rho }}{M} + {{\bar \chi }_{i,sm}}} \right)\frac{{{\bm H_{i,sm}}}}{{\phi \sigma _{i,s}^2}} + \frac{{\hat \rho \eta }}{M}{\bm E_{i,sm}'} + \\
\qquad \qquad \sum\limits_{j \in {\mathcal J}} {{{\bar \mu }_{jm}}{\bm Z_j}}  - {{\bm X}_{i,sm}} = \bm 0
\end{array}
\end{equation}
where $\bm E_{sm}$ and ${\bm E_{i,sm}'}$ are $JK \times JK$ identity matrices.

As the Lagrangian multiplier ${{\bar \mu }_{jm}}$ for all $j \in {\mathcal J}$ is nonnegative, matrices $\frac{\eta }{M}{\bm E_{sm}} + \sum\nolimits_{j \in {\mathcal J}} {{{\bar \mu }_{jm}}{\bm Z_j}}$ and $\frac{{\hat \rho \eta }}{M}{\bm E_{i,sm}'} + \sum\nolimits_{j \in {\mathcal J}} {{{\bar \mu }_{jm}}{\bm Z_j}}$ are full rank. Besides, as ${\bar \varphi}_{i,sm}$ and ${{\bar \chi }_{i,sm}}$ are nonnegative and ${\rm rank}({\bm H}_{i,sm}) = {\rm rank} ({\bm h}_{i,sm}{\bm h}_{i,sm}^{\rm H}) \le 1$, we can conclude that ${\rm rank}($ ${\bm {\Phi}} _{sm}) \ge JK -1$ and ${\rm rank}({\bm X}_{i,sm}) \ge JK - 1$.

On the other hand, the optimal ${\bm V}_{sm}^{\star}$ and ${\bm G}_{i,sm}^{\star}$ will always satisfy the following complementary slackness conditions
\begin{equation}\label{eq:complementary_slack_condition_1}
    {{\bm {\Phi}} _{sm}}{\bm V}_{sm}^{\star} = \bm 0,\forall s \in {{\mathcal S}^e}
\end{equation}
\begin{equation}\label{eq:complementary_slack_condition_2}
    {{\bm X}_{i,sm}}{\bm G}_{i,sm}^{\star} = \bm 0,\forall i \in {\mathcal I}_s^u,s \in {{\mathcal S}^u}
\end{equation}

Since all matrices ${\bm {\Phi}} _{sm}$, ${\bm V}_{sm}^{\star}$, ${{\bm X}_{i,sm}}$, and $\bm G_{i,sm}^{\star}$ are of size of $JK \times JK$, we have ${\rm rank}({\bm {\Phi}} _{sm}) + {\rm rank}({\bm V}_{sm}^{\star}) \le JK$ and ${\rm rank}({{\bm X}_{i,sm}}) + {\rm rank}(\bm G_{i,sm}^{\star}) \le JK$ according to the property of the rank of a matrix. To this end, we obtain the conclusion that ${\rm rank}({\bm V}_{sm}^{\star}) \le 1$ and ${\rm rank}({\bm G}_{i,sm}^{\star}) \le 1$.

Besides, recall that (\ref{eq:standard_transformed_bandwidth_n_beamforming}) is an SDP problem, we may say that the optimal solutions ${\bm V}_{sm}^{\star}$ and ${\bm G}_{i,sm}^{\star}$ to (\ref{eq:standard_transformed_bandwidth_n_beamforming}) can be obtained by some methods such as interior-point methods. This completes the proof.

\subsection{Proof of \textbf{Lemma} 6}
For all $m \in {\mathcal M}$, to prove that ${\bar {\mathcal L}(\omega _{sm}^{e(k)},{\bm v}_{sm}^{(k)},{\bm g}_{i,sm}^{(k)})} $ is bounded, we should prove that variables ${\omega_{sm}^e}$, ${\omega_s^e}$ and ${\psi _{sm}}$ are bounded. Next, we should prove that there exist non-positive coefficients $a_{sm}$ and $a_s$ such that $|{\bar {\mathcal L}^{(k+1)}}  -  {\bar {\mathcal L}^{(k)}}| \le \sum\limits_{s\in {\mathcal S}^e} {a_{sm}}| {\omega _{sm}^{e(k + 1)}}$ $- \omega _{sm}^{e(k)} |  + \sum\limits_{s\in {\mathcal S}^e} {{a_s}| {\omega _s^{e(k + 1)} - \omega _s^{e(k)}} |} $. Since the distributed algorithm is based on the ADMM, we omit the detail proof which is able to be found in the convergence proof of ADMM in \cite{boyd2011distributed,wang2019global}. This completes the proof.

\bibliographystyle{IEEEtran}
\bibliography{network_slice}

\end{document}